\newcounter{subeqn} %
\newtheorem{proposition}{Proposition}
\newtheorem{remark}{Remark}
\begin{document}
\title{Joint Constellation Rotation and Symbol-level Precoding Optimization in the Downlink of Multiuser MISO Channels}

%\author{Desmond Cai%$^\dag$\\
%
\author{
  \IEEEauthorblockN{ Maha~Alodeh,~\IEEEmembership{Member, IEEE}, Björn Ottersten~\IEEEmembership{Fellow Member, IEEE} }\\
\thanks{Maha Alodeh is with Ericsson AB, Address: Torshamnsgatan 21, 164 40 Stockholm, e-mails:\{ maha.mahmoud.hamed.alodeh@ericsson.com, maha.alodeh@gmail.com\}. Björn Ottersten is with Interdisciplinary Centre for Security, Reliability and Trust (SnT), University of Luxembourg,  email:\{bjorn.ottersten@uni.lu\}. The proposed ideas and algorithm of this paper is filed under Downlink Symbol-Level Precoding 
	As Part Of Multiuser MISO Communication, P80999WO1.
	 Maha Alodeh is the corresponding author.   
	
	}  
}
 
%\author{
%	\IEEEauthorblockN{ Maha~Alodeh,~\IEEEmembership{Member, IEEE} }\\
%	\thanks{Maha Alodeh is with Ericsson AB, Address: Torshamnsgatan 21, 164 40 Stockholm, e-mails:\{ maha.mahmoud.hamed.alodeh@ericsson.com, maha.alodeh@gmail.com\}. 
%	}  
%} 

\providecommand{\keywords}[1]{\textbf{\textit{Index terms---}} #1}

\date{}

\maketitle

\begin{abstract}
This paper tackles the problem of the simultaneous interference among the multiple users in the downlink of a wireless multiantenna system. In order to exploit the multiuser interference and transform it into useful power at the receiver side, symbol-level precoding is utilized through  joint exploitation of the data from each individual user and the channel state information. This paper introduces a complexity reduction scheme that exploits the constellations symmetry to reduce the number of possible calculations and the number of constraints in the symbol-level optimization problem.
This paper also proposes a new transmission technique that jointly optimizes the transmit symbol-level precoding  and the constellation rotation of the data stream for each user. The purpose is to increase the probability of having constructive interference, and thus better performance. We focus on improving the energy efficiency by reducing the required power to satisfy certain quality of service constraints through solving a symbol-level precoding problem and finding the optimal phase by which each user constellation should be rotated. This problem is non-convex,  therefore, we propose an efficient branch-and-bound algorithm combined with semidefinite relaxation
to solve the problem and find an efficient solution. Numerical results are presented in a comparative fashion to show the effectiveness of the proposed technique, which outperforms the state-of-the-art symbol-level precoding schemes in terms of energy efficiency, and power consumption.

\end{abstract}
\begin{IEEEkeywords}
Symbol-level precoding, constructive interference, multiuser MISO, complexity reduction, constellation rotation, semidefinite relaxation, bilinear constraints, constant modulus constraints.
\end{IEEEkeywords}
\section{Introduction}
\IEEEPARstart{I}{nterference}
is one of the crucial and limiting factors in wireless networks. Traditionally, time and frequency resources are are allocated and utilized to allow different users to communicate without inducing harmful interference. The
concept of exploiting the users' spatial separation for multiple access has been a fertile research domain for more than three
decades. This can be implemented by adding multiple antennas at one or both
communication sides. Multiantenna transceivers empower communication systems with more degrees
of freedom boosting the performance if the multiuser interference is mitigated
properly. Exploiting the spatial dimension, to serve different users simultaneously
in the same time slot and the same frequency band through spatial division multiple access (SDMA), has been  investigated thoroughly in literature \cite{bjorn_patent,mats,bjornson,Swindlehurst_zeroforcing,leakage_per,leakage,Medra_transaction,Medra}.

Precoding has played
a key role in the downlink of multiuser multiple-input-single-output (MISO) systems enhancing throughput and energy efficiency. Multiuser interference is mitigated by transmitting superimposed signals that are pre-designed at the base station. However, to produce efficient precoding techniques, the base station should acquire or have access to accurate channel state information (CSI). The precoding strategies need to be carefully designed to mitigate interference while optimizing a certain objective performance\cite{mats,leakage,Medra,Medra_transaction}.

 Precoding design can be classified based on two criteria: switching rate and number of users per stream (group size) \cite{maha_survey}.  The switching rate is the rate at which the precoding vectors change, it addresses how many precoding vectors are required to be designed in the coherence assuming the same users set. In this category, there are two precoding strategies has been proposed in the literature: symbol-level precoding \cite{Alireza_LSP1,Alireza_LSP2,Alireza_TSP,Alodeh_ICASSP2017,Jevgevij,Ma,Masouros_made,Spano_ICASSP2018,ashkan,directional modulation1,directional modulation2,doumachtis,maha_TSP,maha_icc_2018,maha_twc_1,maha_twc_2,masouros_1,masouros_2,sohrabi,swindlehurst_slp} and block-level precoding \cite{Chinese_norm,Medra,Medra_transaction,Swindlehurst_zeroforcing,bjorn_patent,leakage,leakage_per,mats}. In block-level precoding, the precoding vectors are designed depending only on CSI and the power of the users, they are fixed as long as CSI and/or the set of served users do not change. On the other hand, the symbol-level precoding, the precoding vector calculations depend on the data symbols and CSI. Therefore, the precoding vector changes with the rate of the data symbols and CSI assuming the same set of users. The block-level precoding designs its vector based on CSI to reduce the leakage of the interference \cite{leakage,leakage_per,Swindlehurst_zeroforcing} or to maximize the energy efficiency \cite{mats}. However, their main purpose is to mitigate interference and treat it only as a harmful factor that can hinder the performance of the wireless communication systems.

Symbol-level precoding (SLP) was proposed to exploit the interference in the downlink of multiuser MISO systems \cite{masouros_1,masouros_2,maha_TSP,maha_twc_1,maha_twc_2,spano_tsp,spano_twc,Spano_ICASSP2018,maha_icc_2018,maha_survey,ashkan,doumachtis,Alireza_TSP,Alireza_LSP1,Alireza_LSP2,Jevgevij,Masouros_made,Ma,Alodeh_ICASSP2017,sohrabi}. This can be implemented by designing the precoding or output vectors symbol-by-symbol, which leads to a better exploiting of the interference nature and structure. The basic idea of symbol rate switching was initially proposed in the context of directional modulation in \cite{directional modulation1,directional modulation2} to improve the security of the wireless systems without discussing its capability of exploiting the interference. Most of the symbol-level precoding literature tackles the problem of energy-efficiency in single-level modulations ($M$-PSK) \cite{masouros_1,masouros_2,maha_TSP,maha_twc_1}. In \cite{maha_twc_2,spano_tsp,Alireza_TSP,Alireza_LSP1,Alireza_LSP2}, the proposed precoding schemes are generalized to any generic modulation. A per-antenna consideration is thoroughly discussed in \cite{spano_tsp}, where  strategies based on the minimization of the power peaks
amongst the transmitting antennas and the reduction of the
instantaneous power imbalances across the different transmitted
streams are investigated. In \cite{doumachtis}, the considered systems tackle the high hardware complexity and power consumption of existing SLP techniques by reducing or completely eliminating fully digital radio frequency (RF) chains. The connection of SLP with the conventional zero-forcing (ZF) precoding is discussed in \cite{Masouros_made} for single-level modulation. It also proposes an iterative closed-form scheme to obtain the optimal beamforming matrix, where within each iteration a closed-form solution can be obtained.  An SLP approach for generic constellations with any arbitrary shape and size is considered in \cite{Alireza_TSP,Alireza_LSP1,Alireza_LSP2}. The proposed algorithm results in less computational complexity precoding schemes for different optimization problems.

Conventionally in SLP, the output vector is calculated per input vector, this leads to huge computational complexity especially at high modulations order and number of users. Also, the output vector is designed to guarantee that each symbol should be received in the correct detection region without any constellation rotation. However, optimized rotation for each user constellation can improve the performance of SLP by aligning the symbols to increase the probability of having constructive interference, which can help pushing them deeper in their correct detection region. This results in better symbols alignments that can maximize the energy efficiency. The optimal phase rotations depend on the multiuser channel of the users that are served simultaneously. These two aspect have not been addressed in literature before and they are the main focus of this work.

Our contribution is twofold. First, the paper introduces a complexity reduction approach by either reducing the number of constraints to be considered in single optimization problem, or the number of optimization problems to be solved within channel coherence time. This is done by exploiting the symmetrical nature of the constellations, which allows to calculate the output vectors for subset of data vectors ; this can reduces the calculations to $1/4$ of the original SLP calculations without losing any performance gain. Second, it proposes a tailored efficient algorithm for optimizing the output vector and the optimal phase by which each user constellation should be rotated to guarantee the maximum energy efficiency, which contrasts to all existing works that focus on the design of the optimal symbol-level precoding without any constellation rotation. The challenging aspect in the joint optimization of the precoding and constellation rotation is the non-convex nature of the constraints in the optimization problem, which are constant modulus and bilinear. Our proposed algorithm is based on
the branch-and-bound strategy combined with an argument
cut technique \cite{Chinese_norm} and semidefinite relaxation \cite{convex_boyd} to solve the problem. The argument cuts are used to
design effective convex relaxations of non-convex constant modulus constraints in problem and therefore
play an important role in our proposed branch-and-bound
algorithm for solving the problem.
Since the joint constellation rotation and precoding design problem is
NP-hard, there does not exist a polynomial time algorithm which can solve it to global optimality.

The organization of this paper is as follows: In Section \ref{System model}, the system model is explained in details. In Section \ref{constructive interference},
we revise the basic concept of symbol-level precoding. A complexity reduction approach is proposed in \ref{complexity_red}. The problem of joint precoding and constellation rotation is illustrated in section \ref{Joint SLP}. An iterative algorithm based on semidefinite relaxation and argument cuts is thoroughly explained in section \ref{Algorithm}. Finally, simulation results are presented in section \ref{Results}
to illustrate the efficiency of our solution to the joint constellation rotation and precoding optimization problem.

Notation: we use boldface upper and lower case letters for
matrices and column vectors, respectively. $(\cdot)^H$, $(\cdot)^*$
 stand for Hermitian transpose and conjugate of $(\cdot)$. $\mathbb{E}(\cdot)$ and $\|\cdot\|$ denote the statistical expectation and the Euclidean norm. $\angle(\cdot)$, $|\cdot|$ are the angle and magnitude  of $(\cdot)$ respectively. $\mathcal{R}\{a\}$ and $\mathcal{I}\{a\}$ are the real part  and the imaginary part of $a$. $\mathbf{I}$ is identity matrix.  The curled inequality symbol $\succcurlyeq$  (its strict form
and reverse form ) is used to denote generalized inequality: $\mathbf{A}\succcurlyeq\mathbf{B}$ 
means that $\mathbf{A}-\mathbf{B}$ is an Hermitian positive semidefinite
matrix.
\section{System Model}
\label{System model}
Let us consider a single-cell multiple-antenna downlink system,
where a base-station equipped with $M$ transmit antennas delivers $K$ independent data streams to $K$
single-antenna user terminals, where
$K\leq M$. Each data stream is divided in blocks of $S$ symbols. Considering a data block, we can define the data information matrix
$\mathbf{D} \in \mathbb{C}^{K\times S}= [\mathbf{d}[1] \dots \mathbf{d}[S]]$, which aggregates the symbol streams
to be delivered to the different users, where $\mathbf{d}[n]=[d_1[n], \hdots, d_K[n]]^T$. We assume a quasi static block fading channel $\mathbf{h}_j\in\mathbb{C}^{1\times
	M}$ between
the BS antennas and the $j^{th}$ user, where the received signal at
j$^{th}$ user is
written as
\begin{eqnarray}
y_j[n]&=&\mathbf{h}_j\mathbf{x}[n]+z_j[n].
\end{eqnarray} $\mathbf{x}[n]\in\mathbb{C}^{M\times 1}$ is the transmitted signal vector from the multiple antennas
transmitter and  $z_j$ denotes the noise at receiver $j$, which is assumed to be independent and identically distributed complex Gaussian $\mathcal{CN}(0,1)$. A compact formulation
of the received signal at all users' receivers can be written as
\begin{eqnarray}
\mathbf{y}[n]&=&\mathbf{H}\mathbf{x}[n]+\mathbf{z}[n].
\end{eqnarray}
Let $\mathbf{x}[n]$ be written as $\mathbf{x}[n]=\sum^K_{j=1}\mathbf{w}_j[n]d_j[n]$,
where $\mathbf{w}_j$ is the $\mathbb{C}^{M\times
	1}$ unit power precoding vector for the user $j$. The received signal at $j^{th}$
user ${y}_j$ in $n^{th}$ symbol period is given by
\begin{eqnarray}
\label{rx_o}
{y}_j[n]=\mathbf{h}_j\mathbf{w}_j[n] d_j[n]+\displaystyle\sum_{k\neq j}\mathbf{h}_j\mathbf{w}_k[n]
d_k[n]+z_j[n].
\end{eqnarray}
The channel state information is assumed to be fully available at the base station. Without loss of generality, we assume that $\mathcal{S}=\mathcal{N}$ and $K=M$, $\mathbf{D} \in \mathcal{C}^{M\times\mathcal{N}}$ presents the brute force of all possible combination of input data vectors, which means $\mathbf{d}[n]\neq \mathbf{d}[m], \forall n, m \in \mathcal{N}$.

\section{Constructive Interference and Symbol-level Precoding}
\label{constructive interference}
In this section, we revise the definition of the constructive interference and the fundamental concept of symbol-level precoding technique\cite{maha_TSP,masouros_2}.   
\subsection{Constructive Interference}
Interference in the downlink of multiuser MISO system can be classified into constructive and destructive. The constructive interference manages to push the signal deeper in the correct detection region at the receiver, otherwise it is considered to be destructive.
A detailed definition for constructive interference is fully explained in \cite{maha_TSP}. In this paper, we describe a summary of the definition for the sake of manuscript completion. 

The constructive interference can be beneficial for single-level modulations such BPSK and $M$-PSK, or for outermost constellation points in multi-level modulations such $M$-QAM and APSK. The number of outermost constellation points that can benefit from the constructive interference is $\frac{M}{2}-1$.
  
An $M$-PSK modulated symbol $d_j$, is said to receive constructive
interference from another simultaneously transmitted symbol $d_k$ which is
associated with $\mathbf{w}_k$ if and only if the following inequalities hold   
\begin{equation}\nonumber
\label{one}
\angle{d_j}-\frac{\pi}{M}\leq \arctan\Bigg(\frac{\mathcal{I}\{\mathbf{h}_j\mathbf{w}_kd_{k}\}}{\mathcal{R}\{\mathbf{h}_j\mathbf{w}_kd_{k}\}}\Bigg)\leq \angle{d_j}+\frac{\pi}{M},
\end{equation}
\begin{equation}\nonumber
\label{two}
\mathcal{R}\{{d_j}\}.\mathcal{R}\{\mathbf{h}_j\mathbf{w}_kd_k\}>0, \mathcal{I}\{{d_j}\}.\mathcal{I}\{\mathbf{h}_j\mathbf{w}_kd_k\}>0,\\
\end{equation}
where $\psi_{jk}$ is the interference created from the trasmission to user $k$ from user $j$. 
\subsection{Symbol-level Precoding in the Downlink of Multiuser System}
\label{SLP Literature}
Interference is a random deviation that can move the desired constellation point in any direction. To address this problem, the power of interference has been used in the past to regulate its effect on the desired signal point. 
The interference among the multiuser spatial streams
leads to a deviation of the received symbols outside of their detection region. To tackle the interference efficiently, symbol-level precoding designs the output vector on symbol basis, which can be a powerful processing tool to exploit the multiuser interference, or to mitigate the non-linear effect. The optimization that
 minimizes the transmit power and 
 the constructive reception of the transmitted data symbols in the SLP context can be written
 as \cite{maha_TSP}
\begin{eqnarray}\nonumber
&\hspace{-0.5cm}\mathbf{w}_k& =\arg\underset{\mathbf{w}_k}{\min}\quad \|\sum^K_{k=1}\mathbf{w}_kd_k\|^2\\
&\hspace{-0.5cm}s.t.& \begin{cases}\mathcal{C}_1:\mathcal{R}\{\mathbf{h}_j\sum^K_{k=1}\mathbf{w}_kd_k\}-\sigma_z\sqrt{\gamma_j}\mathcal{R}\{{d}_j\}\unlhd {0}, \forall j\in \mathcal{K},\\
\mathcal{C}_2:\mathcal{I}\{\mathbf{h}_j\sum^K_{k=1}\mathbf{w}_kd_k\}-\sigma_z\sqrt{\gamma_j}\mathcal{I}\{{d}_j\}\unlhd {0},\forall j\in \mathcal{K}, 
%\mathcal{C}_3:\mathbf{v}_j=\exp(\theta_j)\mathbf{v}, \forall j\in K\\
\end{cases}
\end{eqnarray}
using $\mathbf{x}=\sum^K_{k=1}\mathbf{w}_kd_k$, the problem can be reformulated as:
\begin{eqnarray}\nonumber
\label{SLP_conventional}
&\hspace{-0.5cm}\mathbf{x}_{opt}& =\arg\underset{\mathbf{x}}{\min}\quad \|\mathbf{x}\|^2\\
&\hspace{-0.5cm}s.t.& \begin{cases}\mathcal{C}_1:\mathcal{R}\{\mathbf{h}_j\mathbf{x}\}-\sigma_z\sqrt{\gamma_j}\mathcal{R}\{{d}_j\}\unlhd {0}, \forall j\in \mathcal{K}, \\
\mathcal{C}_2:\mathcal{I}\{\mathbf{h}_j\mathbf{x}\}-\sigma_z\sqrt{\gamma_j}\mathcal{I}\{{d}_j\}\unlhd {0},\forall j\in \mathcal{K}.\\
%\mathcal{C}_3:\mathbf{v}_j=\exp(\theta_j)\mathbf{v}, \forall j\in K\\
\end{cases}
\end{eqnarray}
$\gamma_j$ is the signal to noise ratio (SNR) target for user $j$. $d_j$ is the data symbol for the user $j$, belongs to to any selected  constellation. $N$ is the number of possible data vectors $2^{\sum^K_{j}m_j}$,  where $m_j$ is the modulation order assigned to user $j$.
$\unlhd$ is an element-wise operator to guarantee that each symbol is received in the correct detection region. This problem was solved efficiently in the literature using different techniques. The problem in \eqref{SLP_conventional} can be expanded to evaluate the output vector $\mathbf{x}$ for different input data vectors, possibly brute force all possible combinations, in one optimization problem as 
\begin{eqnarray}\nonumber
\label{SLP_conventional_temporal}
&\hspace{-0.5cm}\mathbf{x}_{opt}[n]& =\arg\underset{\mathbf{x}[n]}{\min}\quad \frac{1}{N}\sum^{N}_{n=1}\|\mathbf{x}[n]\|^2\\ 
&\hspace{-0.5cm}s.t.& \begin{cases}\mathcal{C}_1:\mathcal{R}\{\mathbf{h}_j\mathbf{x}[n]\}-\sigma_z\sqrt{\gamma_j}\mathcal{R}\{{d}_j[n]\}\unlhd {0},\\ \forall j\in \mathcal{K}, \forall n \in \mathcal{N}\\
\mathcal{C}_2:\mathcal{I}\{\mathbf{h}_j\mathbf{x}[n]\}-\sigma_z\sqrt{\gamma_j}\mathcal{I}\{{d}_j[n]\}\unlhd {0},\\ \forall j\in \mathcal{K}, \forall n \in \mathcal{N}.\\
%\mathcal{C}_3:\mathbf{v}_j=\exp(\theta_j)\mathbf{v}, \forall j\in K\\
\end{cases}.
\end{eqnarray}
 The difference between \eqref{SLP_conventional} and \eqref{SLP_conventional_temporal} is the introduction of the temporal dimension in \eqref{SLP_conventional_temporal}. This leads to solve a single quadratic optimization with $2\mathcal{K}\mathcal{N}$ affine constraints rather than $\mathcal{N}$ quadratic optimization problems with $2\mathcal{K}$ affine constraints.  However, initial results indicate that solving \eqref{SLP_conventional_temporal} requires less execution time than \eqref{SLP_conventional} for $\mathcal{N}$ times\cite{spano_twc}. If the number of possible input data vectors $\mathcal{N}$ is larger than than the precoded symbols in the frame, it is better to evaluate the output vectors for the input data vectors that are included in the frame. This may happen when there is very large number of users that are served simultaneously. On the other hand, solving \eqref{SLP_conventional}-\eqref{SLP_conventional_temporal} leads to same results, therefore, the system designer can select which optimization to solve. However, in the current SLP techniques \cite{Alireza_LSP1,Alireza_LSP2,Alireza_TSP,Jevgevij,Ma,Masouros_made,Spano_ICASSP2018,ashkan,doumachtis,maha_TSP,maha_icc_2018,maha_twc_1,maha_twc_2,masouros_2,masouros_1,spano_tsp,spano_twc,Alodeh_ICASSP2017,maha_survey}, the output vector is designed to direct the signal in the correct detection region, without any constellation rotation optimization. 

 \section{Complexity Reduction}
 \label{complexity_red}

 The bottleneck of employing SLP in current communication systems is the number of calculations, which is a function of channel state information and the number of possible data vectors $\mathbf{d}[n]$ per channel realization. The number of possible output vector calculations equals to $(2^{\sum_jm_j})$ for each channel realization in \eqref{SLP_conventional} or  $2^{1+\sum_jm_j}$ constraints in the single optimization problem interpretation in \eqref{SLP_conventional_temporal}, this leads to huge computational complexity especially for large number of users and/or high modulation orders.
 
  One of the constellation's distinguishing feature is its symmetry. This can be utilized to reduce the number of constraints in optimization problem \eqref{SLP_conventional_temporal}, or the number of optimization problems to be solved \eqref{SLP_conventional}. We can find some relations among different data vectors $\mathbf{d}[n]$ and how this reflects on the relation among their corresponding optimal output vectors $\mathbf{x}[n]$. For example, the data vectors $\mathbf{d}[n]$ and $\mathbf{d}[m]$ have the relation as $\mathbf{d}[m]=-\mathbf{d}[n]$,  the output vector relation should be $\mathbf{x}[m]=-\mathbf{x}[n]$ and thus the same amount of power is required to achieved the target SNR. Based on this observation, we can state a remark as
  
  \begin{remark}
When the data vectors exhibit the relation $\mathbf{d}[n]=\zeta\mathbf{d}[m]$, the output vector shall also have $\mathbf{x}[n]=\zeta\mathbf{x}[m]$  where $\zeta\in\{1, -1, -j, j\}$.   	
  \end{remark}
  
  \begin{proof}
  We assume that the optimal solution for data vector $\mathbf{d}[n]$ equals to $\mathbf{x}[n]$, then, the optimal output vector $\mathbf{x}[m]$ for the data vector $\mathbf{d}[m]=\zeta \mathbf{d}[n]$ can be found by solving the following optimization
  \begin{eqnarray}\nonumber
  \label{SLP_conventional}
  &\hspace{-0.5cm}\mathbf{x}_{opt}[m]& =\arg\underset{\mathbf{x}[m]}{\min}\quad \|\mathbf{x}[m]\|^2\\\nonumber
  &\hspace{-0.5cm}s.t.& \begin{cases}\mathcal{C}_1:\mathcal{R}\{\mathbf{h}_j\mathbf{x}[m]\}-\sigma_z\sqrt{\gamma_j}\mathcal{R}\{\zeta{d}_j[n]\}\unlhd {0}, \forall j\in \mathcal{K}, \\
  \mathcal{C}_2:\mathcal{I}\{\mathbf{h}_j\mathbf{x}[m]\}-\sigma_z\sqrt{\gamma_j}\mathcal{I}\{\zeta{d}_j[n]\}\unlhd {0},\forall j\in \mathcal{K}.\\
  %\mathcal{C}_3:\mathbf{v}_j=\exp(\theta_j)\mathbf{v}, \forall j\in K\\
  \end{cases}
  \end{eqnarray}
  Substituting $\mathbf{d}[m]=\zeta\mathbf{d}[n]$, we can reformulate the optimization problem as
    \begin{eqnarray}\nonumber
  \label{SLP_conventional}
  &\hspace{-1.5cm}\mathbf{x}_{opt}[m]& =\arg\underset{\mathbf{x}[m]}{\min}\quad \|\mathbf{x}[m]\|^2\\\nonumber
  &\hspace{-1cm}s.t.& \begin{cases}\mathcal{C}_1:\mathcal{R}\{\mathbf{h}_j\mathbf{x}[m]\}\unlhd\\\sigma_z\sqrt{\gamma_j}(\mathcal{R}\{\zeta\}\mathcal{R}\{{d}_j[n]\} -\mathcal{I}\{\zeta\}\mathcal{I}\{{d}_j[n]\}) , \forall j\in \mathcal{K}, \\
  \mathcal{C}_2:\mathcal{I}\{\mathbf{h}_j\mathbf{x}[m]\}\unlhd\\\sigma_z\sqrt{\gamma_j}(\mathcal{I}\{\zeta\}\mathcal{R}\{{d}_j[n]\} +\mathcal{I}\{\zeta\}\mathcal{R}\{{d}_j[n]\}) ,\forall j\in \mathcal{K}.\\
  %\mathcal{C}_3:\mathbf{v}_j=\exp(\theta_j)\mathbf{v}, \forall j\in K\\
  \end{cases}
  \end{eqnarray}
  From this reformulation, we can find that $\mathbf{x}[m]=\zeta\mathbf{x}[n]$ if $\zeta=1,-1,j,-j$.
  \end{proof}
  
 This means that when the data vectors are co-linear, the output vectors are also co-linear.  
 Based on this remark, we can reduce the number of output vector calculations or constraints per channel realization. However, this requires to find a subset of data vectors from which we can generate the remaining co-linear data vectors. We can propose the following data vectors as 
  \begin{eqnarray}
  \label{reduced combination}
  &	\mathcal{F}&=\Big\{\mathbf{d}[n]=[d_1[n], \hdots, d_K[n]]^T\Big|\\\nonumber &d_1[n]&\in \mathcal{Q}_1, d_j[n]\in \mathcal{Q}_m, \forall m\in\{1,2,3,4\}, \forall j\in \mathcal{K}/1\Big\},
  \end{eqnarray}
  where $\mathcal{Q}_m$ denotes the quadrant $m$. This set is not unique and different sets can be found using the same logic. The rest of data vectors can be represented using the set of selected data vectors set $\mathcal{F}$ as
  \begin{eqnarray}
  \label{reduced_data_vectors}
  	\mathbf{d}[m]=\zeta \mathbf{d}[n],\quad \zeta\in \{-1, j, -j\}, \quad\forall \mathbf{d}[n] \in \mathcal{F}.
  \end{eqnarray} 
  
 The main idea is to find the optimal output vectors for this subset of selected data vectors $\mathcal{F}$ using the optimization problem in \eqref{SLP_conventional}\eqref{SLP_conventional_temporal} or Algorithm 1. Then, we can find the output vectors for the remaining data vectors by exploiting the co-linearity as 
    \begin{eqnarray}
  \label{relation}
 \hspace{-0.9cm} \mathbf{x}[m]=\Big\{\zeta\mathbf{x}[n]:\zeta\in\{-1,-j,j\}\text{ if } \mathbf{d}[m]=\zeta\mathbf{d}[n], \forall \mathbf{d}[n] \in \mathcal{F} \Big\}.
  \end{eqnarray}
        Instead of calculating the output vectors for $2^{\sum_jm_j}$ data vectors, we calculate for $2^{\sum_jm_j-2}$ or $2^{\sum_jm_j-1}$ data vectors depending on the adopted modulations. For example, we have two users and each one of them is allocated 4-QAM. The number of possible data vectors is 16, which results in solving either 16 optimization problems with 4 constraints or one optimization problem with 64 constraints. In this case it is sufficient to consider four data vectors as 
        \begin{eqnarray}\nonumber
        	\hspace{-0.8cm}\mathcal{F}=\Bigg\{\frac{1}{\sqrt{2}}\begin{bmatrix}
        1+1i \\1+1i
        	\end{bmatrix},\frac{1}{\sqrt{2}}\begin{bmatrix}
        	1+1i \\1-1i
        	\end{bmatrix},\frac{1}{\sqrt{2}}\begin{bmatrix}
        1+1i \\-1+1i
        	\end{bmatrix}
        	\frac{1}{\sqrt{2}}\begin{bmatrix}
        	1+1i \\-1-1i
        	\end{bmatrix}\Bigg\}.
        \end{eqnarray}
        This results in  solving one optimization problem with 16 constraints or 4 optimization problems with 4 constraints. The remaining data vectors can be found  \eqref{reduced_data_vectors} and their corresponding output vectors \eqref{relation}.
        This decreases the vectors and matrices size in the optimization problem in \eqref{one_dimension} to quarter, and hence less complexity and algorithm running time. From now on, we can use $\mathcal{N}$ as the full set generated from brute force of all possible input data vectors or the reduced set as generated in this section, without loss loss of generality.

\section{Joint SLP precoding and phase rotation optimization}
\label{Joint SLP}
The importance of the constellation rotation in the context of symbol-level precoding stems from the fact that we want to increase the chances of generating constructive interference to add up in the correct detection region without additional processing, and thus results in better performance. This rotation is optimized per coherence time and it is fixed and does not change with each symbol transmission since this leads to problems in phase locking  and phase noise at the receiver side and additional signalling at the transmitter side. 
\subsection{Spatio-temporal presentation}
The spatio-temporal model helps simplifying the symbol-level problem; instead of solving $\mathcal{N}$ optimizations with $2K$ constraints, we can solve one optimization problem $2\mathcal{K}\mathcal{N}$  \cite{Alodeh_ICASSP2017,spano_twc,Spano_ICASSP2018}. This representation can capture the relation between the different transmitted data vectors in certain coherence time and helps optimizing parameters at the symbol-level or at the block-level. For example, we can optimize the constellation rotation for each user, this rotation is common among all the symbols transmitted in the coherence time/frame. The equivalent channel that captures the spatial and temporal domain can be formulated as
\begin{eqnarray}
\label{Spatio-temporal}
\mathbf{G}_j=\mathbf{h}_j\otimes \mathbf{I},
\end{eqnarray}
where $\mathbf{I}$ is an identity matrix of size $\mathcal{N}\times\mathcal{N}$, and $\mathbf{g}_j[n]$ is $n^{\text{th}}$ row of the matrix $\mathbf{G}_j$, and represents the channel at the data vector $\mathbf{d}[n]$. This formulation is required in the next subsection.

% \begin{figure*}[t]
% 	\hspace{0.2cm}
% 	\begin{tabular}[t]{c}
%\begin{minipage}{17 cm}
%	\includegraphics[scale=0.55]{constellation_rotation.eps}
%	\caption{16-QAM constellation with $0$ rotation and $\pi/4$ rotation  }	
%\end{minipage}
%\end{tabular}
%\end{figure*}

\subsection{SLP with constellation rotation optimization}

In order to find the optimal phase by which user's constellation should be rotated, we need to optimize the phase considering all the precoding vectors for all input data vectors. Therefore, we need to brute force all the symbols for all users, the number of symbol combination equals to $2^{\sum_{j\in K}m_j}$. The joint design of transmit output vectors and constellation rotation phases can then be posed
as the problem of minimizing the total radiated power (or maximizing the energy efficiency) subject
to meeting prescribed signal-to-noise ratio
(SNR) constraints at each of the receivers as
\begin{eqnarray}\nonumber
\label{SLP_rotation_1}
&\hspace{-0.5cm}\mathbf{x}_{opt}[n]& =\arg\underset{\mathbf{x}[n],\theta_j}{\min}\quad \frac{1}{N}\sum^{N}_{n=1}\|\mathbf{x}[n]\|^2\\ 
&\hspace{-0.5cm}s.t.& \begin{cases}\mathcal{C}_1:\mathcal{R}\{\mathbf{h}_j\mathbf{x}[n]\}-\sigma_z\sqrt{\gamma_j}\mathcal{R}\{{d}_j[n]\exp(-\theta_ji)\}\unlhd {0},\\ \forall j\in \mathcal{K}, \forall n \in \mathcal{N}\\
\mathcal{C}_2:\mathcal{I}\{\mathbf{h}_j\mathbf{x}[n]\}-\sigma_z\sqrt{\gamma_j}\mathcal{I}\{{d}_j[n]\exp(-\theta_ji)\}\unlhd {0},\\ \forall j\in \mathcal{K}, \forall n \in \mathcal{N}.\\
%\mathcal{C}_3:\mathbf{v}_j=\exp(\theta_j)\mathbf{v}, \forall j\in K\\
\end{cases}.
\end{eqnarray}
In \eqref{SLP_rotation_1},  the temporal dimension is added to optimize the phase rotation considering all possible data vectors or the reduced set of data vectors, this means that the phase rotation for each user does not change with each symbol and is fixed as long as CSI or the set of served users does not change. Each user has a specific constellation rotation; $\theta_j$ is the phase rotation for the user $j$ and does not change with respect to $n$ and it is optimized once when the channel changes unlike the output vector $\mathbf{x}[n]$ that changes with each symbol.  For the sake of convenience, we reformulate the optimization problem by gathering the variable that we want to optimize in one side as  
\begin{eqnarray}\nonumber
&\hspace{-0.5cm}\mathbf{x}[n]& =\arg\underset{\mathbf{x}[n],\theta_j}{\min}\quad \frac{1}{N}\sum^{N}_{n=1}\|\mathbf{x}[n]\|^2\\
&\hspace{-0.5cm}s.t.& \begin{cases}\mathcal{C}_1:\mathcal{R}\{\exp(\theta_ji)\mathbf{h}_j\mathbf{x}[n]\}-\sigma_z\sqrt{\gamma_j}\mathcal{R}\{{d}_j[n]\}\unlhd {0},\\ \forall j\in \mathcal{K}, \forall n \in \mathcal{N}\\
\mathcal{C}_2:\mathcal{I}\{\exp(\theta_ji)\mathbf{h}_j\mathbf{x}[n]\}-\sigma_z\sqrt{\gamma_j}\mathcal{I}\{{d}_j[n]\}\unlhd {0},\\ \forall j\in \mathcal{K}, \forall n \in \mathcal{N}.\\
%\mathcal{C}_3:\mathbf{v}_j=\exp(\theta_j)\mathbf{v}, \forall j\in K\\
\end{cases}
\end{eqnarray}
A further reformulation using $t_j=\exp(\theta_ji)$ can be described as
\begin{eqnarray}\nonumber
\label{one_dimension}
&\hspace{-0.8cm}\mathbf{x}[n]& =\arg\underset{\mathbf{x}[n],u_j}{\min}\quad \frac{1}{N}\sum^{N}_{n=1}\|\mathbf{x}[n]\|^2\\
&\hspace{-0.8cm}s.t.&\quad \begin{cases}\mathcal{C}_1:\mathcal{R}\{t_j\mathbf{h}_j\mathbf{x}[n]\}-\sigma_z\sqrt{\gamma_j}\mathcal{R}\{{d}_j[n]\}\unlhd {0},\\ \forall j\in \mathcal{K}, \forall n \in \mathcal{N}\\
\mathcal{C}_2:\mathcal{I}\{t_j\mathbf{h}_j\mathbf{x}[n]\}-\sigma_z\sqrt{\gamma_j}\mathcal{I}\{{d}_j[n]\}\unlhd {0},\\ \forall j\in \mathcal{K}, \forall n \in \mathcal{N}\\
\mathcal{C}_3:|t_j|=1,\forall j \in K.
%\mathcal{C}_3:\mathbf{v}_j=\exp(\theta_j)\mathbf{v}, \forall j\in K
\end{cases}
\end{eqnarray}
The set of constraints $\mathcal{C}_1$, $\mathcal{C}_2$ and $\mathcal{C}_3$ are non-convex, $\mathcal{C}_1$ and $\mathcal{C}_2$ are bilinear, $C_3$ is a constant modulus constraint that should be satisfied with equality and cannot be relaxed to $|t_j|\leq 1$ since it results in $t_j=0$, and consequently this means no transmission will occur.
% The problem can be reformulated using $\mathbf{p}_j=t_j\mathbf{x}$ as:
%\begin{eqnarray}\nonumber
%\label{one_dimension}
%&\hspace{-0.8cm}\mathbf{x}[n]& =\arg\underset{\mathbf{x}[n],t_j}{\min}\quad \frac{1}{N}\sum^{N}_{n=1}\|\mathbf{x}[n]\|^2\\
%&\hspace{-0.8cm}s.t.&\quad \begin{cases}\mathcal{C}_1:\mathcal{R}\{\mathbf{h}_j\mathbf{x}_j[n]\}-\sqrt{\gamma_j}\mathcal{R}\{{d}_j[n]\}\unlhd {0},\\ \forall j\in \mathcal{K}, \forall n \in \mathcal{N}\\
%\mathcal{C}_2:\mathcal{I}\{\mathbf{h}_j\mathbf{x}_j[n]\}-\sqrt{\gamma_j}\mathcal{I}\{{d}_j[n]\}\unlhd {0},\\ \forall j\in \mathcal{K}, \forall n \in \mathcal{N}\\
%\mathcal{C}_3:\mathbf{x}_j[n]=t_j\mathbf{x}[n],\forall j \in K,\forall n \in \mathcal{N}\\
%\mathcal{C}_4:|t_j|=1,\forall j \in K\\
%%\mathcal{C}_3:\mathbf{v}_j=\exp(\theta_j)\mathbf{v}, \forall j\in K\\
%\end{cases}
%\end{eqnarray}
%Now, the constraints $\mathcal{C}_1,\mathcal{C}_2$ are affine, $\mathcal{C}_3$ is bilinear and $\mathcal{C}_4$ is a non-convex constant modulus constraint.
 To solve the problem, it is useful to utilize the two dimensional channel model in \eqref{Spatio-temporal} to facilitate the effect of different input data vectors on finding the optimal phase rotation. This model makes the optimization problem in \eqref{one_dimension} more compact and easier to tackle. Let us define $\mathbf{p}\in\mathbb{C}^{NM\times 1}$ as the vector  that stacks all the output vectors $\mathbf{x}[n]$ for all (or reduced) input data vectors as
\begin{eqnarray}
 \mathbf{p}=\begin{bmatrix}
  \mathbf{x}[1]\\
  \mathbf{x}[2]\\
  \vdots\\
  \mathbf{x}[N]\\ 
 \end{bmatrix},
\end{eqnarray}
and using the spatio-temporal channel is defined in \eqref{Spatio-temporal}, the problem can be expressed as
\begin{eqnarray}\nonumber
\label{phaserotation}
&\vspace{-0.5cm}\mathbf{p}_{opt}&=\arg\underset{\mathbf{p},t_j}{\min}\quad \frac{1}{N}\|\mathbf{p}\|^2\\
&s.t.&\quad \begin{cases}\mathcal{C}_1:\mathcal{R}\{\mathbf{g}_j[n]\mathbf{p}_j\}-\sigma_z\sqrt{\gamma_j}\mathcal{R}\{{d}_j[n]\}\unlhd {0},\forall j\in \mathcal{K},\\
\mathcal{C}_2:\mathcal{I}\{\mathbf{g}_j[n]\mathbf{p}_j\}-\sigma_z\sqrt{\gamma_j}\mathcal{I}\{{d}_j[n]\}\unlhd {0}, \forall j\in \mathcal{K},\\
\mathcal{C}_3:\mathbf{p}_j=t_j\mathbf{p},\forall j \in K\\
\mathcal{C}_4:|t_j|=1,\forall j \in K,
\end{cases}
\end{eqnarray}
where $\mathbf{g}_j[n]$ is the $n^{\text{th}}$ row of the matrix $\mathbf{G}_j$ defined in \eqref{Spatio-temporal} and is the equivalent channel for user $j$ at the symbol instant $n$, which takes into consideration the possibility of all symbols combinations that are transmitted and ISI if it exists. 
A common approach to deal with non-convex problems in
practice is to relax the non-convex constraints $\mathcal{C}_1$-$\mathcal{C}_3$ to obtain a convex
problem that represents the original one.
To solve the problem, we use semidefinite programming (SDP), we define the vector $\mathbf{v}\in \mathbb{C}^{(\mathcal{N}M+K)\times 1}$ as a vector that includes all the parameters that we want to optimize as
\begin{eqnarray}
\mathbf{v}=\begin{bmatrix}\mathbf{p}\\\mathbf{t}\end{bmatrix}=\begin{bmatrix}\mathbf{p} \\ t_1\\ \vdots\\ t_k\end{bmatrix}.
\end{eqnarray}
Utilizing the vector $\mathbf{v}$ helps us get rid of the bilinear constraint $\mathcal{C}_3$.
For the ease of exposition, we define the matrices  $\mathbf{P}\triangleq\mathbf{p}\mathbf{p}^H$, $\mathbf{T}\triangleq\mathbf{t}\mathbf{t}^H$, and the block matrix $\mathbf{V}\triangleq\mathbf{v}\mathbf{v}^H$ as
\begin{eqnarray}
\mathbf{V}=\begin{bmatrix}
\mathbf{P}&\mathbf{F}\\
\mathbf{F}^H&\mathbf{T}
\end{bmatrix},
\end{eqnarray}
where  $\mathbf{F}$ is formulated as
\begin{eqnarray}\nonumber
\mathbf{F}&=&\begin{bmatrix}
\mathbf{p}_1&\mathbf{p}_2&\hdots&\mathbf{p}_K
\end{bmatrix}.
%&=&\begin{bmatrix}
%t_1\mathbf{p}&t_2\mathbf{p}&\hdots&t_K\mathbf{p}
%\end{bmatrix}.
\end{eqnarray}
In case of optimal solution, the rank of the matrix $\mathbf{F}$ should equal to one, to enforce the columns to be linearly dependent. To guarantee that,
the matrix $\mathbf{T}$ should have the following structure to satisfy the constraint $\mathcal{C}_3,\mathcal{C}_4$ in \eqref{phaserotation}
\begin{eqnarray}
\mathbf{T}=\begin{cases} [\mathbf{T}]_{i,j}=1, &\quad  i=j\\ |[\mathbf{T}]_{i,j}|=1, &\quad i\neq j \end{cases}.
\end{eqnarray}
Using semidefinite relaxation, the optimization problem can be formulated as 

\begin{eqnarray}\nonumber
\label{semidefinite_norm}
&\mathbf{P}_{opt}&=\arg\underset{\mathbf{P}}{\min}\quad tr(\mathbf P)\\
&s.t.& \begin{cases}\mathcal{C}_1:\mathcal{R}\{\mathbf{g}_j[n]\mathbf{p}_j\}-\sigma_z\sqrt{\gamma_j}\mathcal{R}\{{d}_j[n]\}\unlhd {0}, \forall j\in \mathcal{K},\\
\mathcal{C}_2:\mathcal{I}\{\mathbf{g}_j[n]\mathbf{p}_j\}-\sigma_z\sqrt{\gamma_j}\mathcal{I}\{{d}_j[n]\}\unlhd {0}, \forall j\in \mathcal{K},\\
\mathcal{C}_3:\mathbf{V}\succcurlyeq 0,\\
\mathcal{C}_4:[\mathbf{T}]_{i,j}=1, i=j, \\
\mathcal{C}_5:|[\mathbf{T}]_{i,j}|=1, i\neq j.
\end{cases}
\end{eqnarray}

\begin{proposition}
The equality constant modulus constraint $\mathcal{C}_5$ guarantees that rank of matrix $\mathbf{P}$ equals to 1.  
\end{proposition}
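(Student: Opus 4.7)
The plan is to exploit the block structure of $\mathbf{V}$ together with positive semidefiniteness to cascade a rank-one property from $\mathbf{T}$ to $\mathbf{F}$ and finally to $\mathbf{P}$. The first stage shows that the constraints $[\mathbf{T}]_{ii}=1$, $|[\mathbf{T}]_{ij}|=1$ combined with $\mathbf{V}\succcurlyeq 0$ already force $\mathbf{T}$ to be rank one. Since $\mathbf{V}\succcurlyeq 0$, every principal submatrix of $\mathbf{T}$ is positive semidefinite; for any triple $(i,j,k)$ the $3\times 3$ principal submatrix has determinant $1-|T_{ij}|^2-|T_{jk}|^2-|T_{ik}|^2+2\mathcal{R}\{T_{ij}T_{jk}T_{ki}\}=-2+2\mathcal{R}\{T_{ij}T_{jk}T_{ki}\}$, which is non-negative only when $T_{ij}T_{jk}T_{ki}=1$. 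This phase-consistency is exactly enough to write $T_{ij}=t_i t_j^*$ with $|t_i|=1$, so $\mathbf{T}=\mathbf{t}\mathbf{t}^H$.

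Next I would propagate rank-one from $\mathbf{T}$ to $\mathbf{F}$ using a Gram decomposition. Since $\mathbf{V}\succcurlyeq 0$, write $\mathbf{V}=\mathbf{U}\mathbf{U}^H$ and partition $\mathbf{U}$ vertically into the $MN$-row block $\mathbf{U}_1$ and the $K$-row block $\mathbf{U}_2$, so that $\mathbf{P}=\mathbf{U}_1\mathbf{U}_1^H$, $\mathbf{T}=\mathbf{U}_2\mathbf{U}_2^H$, and $\mathbf{F}=\mathbf{U}_1\mathbf{U}_2^H$. Rank one of $\mathbf{T}$ forces rank one of $\mathbf{U}_2$, so $\mathbf{U}_2=\mathbf{t}\mathbf{e}^H$ for some unit-norm $\mathbf{e}$; consequently $\mathbf{F}=(\mathbf{U}_1\mathbf{e})\mathbf{t}^H$ is itself rank one, recovering the intended column-wise structure $\mathbf{p}_j=t_j^{*}(\mathbf{U}_1\mathbf{e})$.

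The final step is to show that optimality of the SDP objective $\mathrm{tr}(\mathbf{P})$ forces $\mathbf{P}$ itself to be rank one. I would construct the candidate $\mathbf{P}'=(\mathbf{U}_1\mathbf{e})(\mathbf{U}_1\mathbf{e})^H$; replacing $\mathbf{P}$ by $\mathbf{P}'$ inside the block yields
\begin{equation*}
\mathbf{V}'=\begin{bmatrix}\mathbf{U}_1\mathbf{e}\\ \mathbf{t}\end{bmatrix}\begin{bmatrix}(\mathbf{U}_1\mathbf{e})^H & \mathbf{t}^H\end{bmatrix}\succcurlyeq 0,
\end{equation*}
which preserves $\mathbf{F}$ and $\mathbf{T}$ and therefore satisfies $\mathcal{C}_1,\mathcal{C}_2,\mathcal{C}_4,\mathcal{C}_5$. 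Because $\mathrm{tr}(\mathbf{P}')=\mathbf{e}^H\mathbf{U}_1^H\mathbf{U}_1\mathbf{e}\leq\lambda_{\max}(\mathbf{U}_1^H\mathbf{U}_1)\leq\mathrm{tr}(\mathbf{U}_1^H\mathbf{U}_1)=\mathrm{tr}(\mathbf{P})$, optimality of the original $\mathbf{P}$ requires equality throughout, which is only possible when $\mathbf{U}_1^H\mathbf{U}_1$ has at most one non-zero eigenvalue; this is exactly $\mathrm{rank}(\mathbf{P})=1$, the case $\mathbf{P}=0$ being ruled out by the SNR constraints.

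The main obstacle I expect is the tightness analysis in the third step: the chain $\mathbf{e}^H\mathbf{U}_1^H\mathbf{U}_1\mathbf{e}\leq\lambda_{\max}\leq\mathrm{tr}$ immediately produces a feasible rank-one alternative, but concluding that the optimal $\mathbf{P}$ itself is rank one, rather than merely admits a rank-one replacement, requires pinning down that $\mathbf{e}$ is determined by the Gram factorization of $\mathbf{T}$ so that any slack in the inequalities strictly reduces the objective and therefore contradicts optimality. A secondary subtlety is that the $3\times 3$ determinant argument in stage one degenerates when $K=2$; there the rank-one conclusion must instead be drawn directly from $\det\mathbf{T}=1-|T_{12}|^{2}=0$.
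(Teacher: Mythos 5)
Your proof is correct, and it takes a genuinely different — and substantially more rigorous — route than the paper's. The paper's own argument is essentially a one-line appeal to the pre-relaxation structure: it asserts that the constant modulus constraint makes the columns of $\mathbf{F}$ linearly dependent and then writes $\mathbf{P}\equiv\mathbf{F}\mathbf{F}^H$ (an identity that in fact only holds up to the factor $K$, since $\mathbf{F}\mathbf{F}^H=\mathbf{p}\mathbf{t}^H\mathbf{t}\mathbf{p}^H=K\,\mathbf{p}\mathbf{p}^H$). In effect it re-uses the bilinear constraint $\mathbf{p}_j=t_j\mathbf{p}$ from the unrelaxed problem \eqref{phaserotation}, which is precisely the constraint dropped in the SDR and hence precisely what must be re-derived. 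You instead work entirely from the relaxed feasible set: positive semidefiniteness of $\mathbf{V}$ together with the unit-modulus entries forces $\mathbf{T}=\mathbf{t}\mathbf{t}^H$, the Gram factorization of $\mathbf{V}$ then cascades rank one to $\mathbf{F}$, and an exchange argument at optimality pins down the rank of $\mathbf{P}$ itself. The last step is essential and is the part the paper glosses over: for a merely feasible point one can always add any $\mathbf{Q}\succcurlyeq 0$ to the $\mathbf{P}$-block without violating $\mathcal{C}_1$--$\mathcal{C}_5$, so rank one genuinely holds only at the optimum, exactly as your trace inequality chain shows. Two minor improvements: your first stage can be done uniformly for all $K\geq 2$ (avoiding the separate $K=2$ case) by noting that in the Gram representation $[\mathbf{T}]_{ij}=\langle\mathbf{u}_i,\mathbf{u}_j\rangle$ the conditions $\|\mathbf{u}_i\|=1$ and $|[\mathbf{T}]_{ij}|=1$ saturate Cauchy--Schwarz, so all Gram vectors are parallel; and it is worth stating explicitly that $\mathbf{P}\neq 0$ follows from the SNR constraints with $\gamma_j>0$, so the rank is exactly one rather than at most one.
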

\begin{proof}
In the case of optimal solution, the constant modulus constraint guarantees $|v_j|=1$ in \eqref{one_dimension} and hence that the vectors in $\mathbf{F}$ are linearly dependent. This results in the rank of $ \mathbf{P}$ equals to 1, and consequently the rank of $\mathbf{P}\equiv\mathbf{F}\mathbf{F}^H$ is one.  
\end{proof}
\subsection{Detection Region}
\label{detection region}
In \cite{maha_twc_2}, we illustrate the detailed constraints of the circular and rectangular modulations to facilitate the design of constructive interference in SLP context. For the sake of completion, we revisit them again in the paper to make it clear for the reader, and clarify them in the optimization problem \eqref{semidefinite_ac}.
\begin{figure}[h]
	\begin{center}
		\includegraphics[scale=0.4]{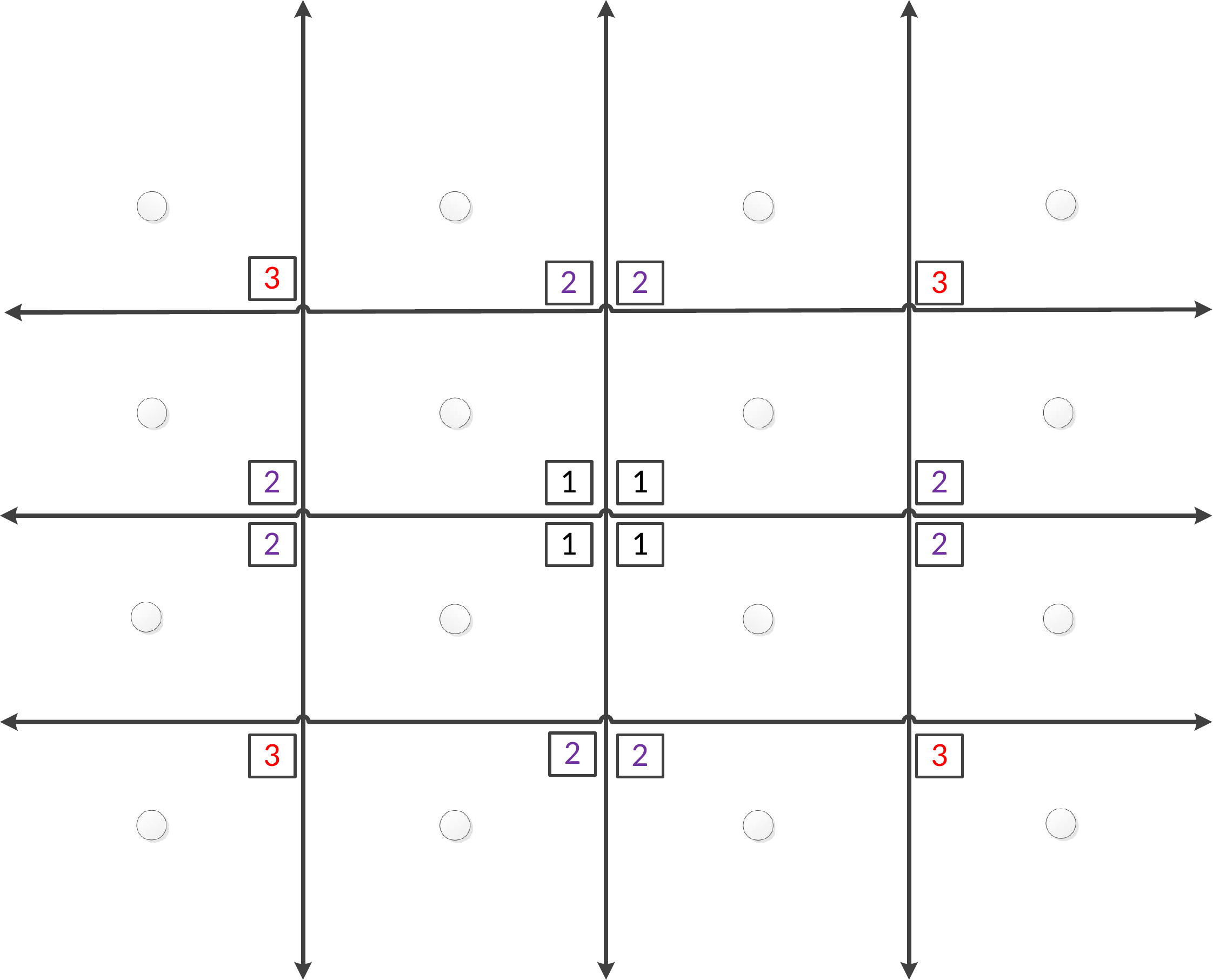}
		\caption{\label{qam} 16 QAM Constellation}
	\end{center}
\end{figure}

\subsubsection{Detailed Optimization For Circular Modulation}
For any circular modulation ($M$-PSK, APSK), the optimization can be expanded as following
\begin{eqnarray}\nonumber
\label{SLP_APSK}
\hspace{-0.3cm}&\mathbf{P}_{opt}&=\arg\min_\mathbf{P}\quad tr(\mathbf{P})\\ &s.t.&\quad\begin{cases}  
\mathcal{C}_1:\mathcal{R}\{\mathbf{g}_j[n]\mathbf{p}_j\}\unlhd \sqrt{\gamma_j}\sigma_z\mathcal{R}\{d_j[n]\}\\
\mathcal{C}_2:\mathcal{I}\{\mathbf{g}_j[n]\mathbf{p}_j\}\unlhd \sqrt{\gamma_j}\sigma_z\mathcal{I}\{d_j[n]\} \forall j\in K\\
\mathcal{C}_3:\mathcal{I}\{\mathbf{g}_j[n]\mathbf{p}_j\}-a_j[n]\mathcal{R}\{\mathbf{g}_j[n]\mathbf{p}_j\}=0\\
\mathcal{C}_4:\mathbf{V}\succcurlyeq 0,\\
\mathcal{C}_5:[\mathbf{T}]_{i,j}=1, i=j, \\
\mathcal{C}_6:|[\mathbf{T}]_{i,j}|=1, i\neq j.
\end{cases}
\end{eqnarray}
where
\begin{eqnarray}\nonumber
\mathcal{R}\{\mathbf{g}_j[n]\mathbf{p}_j\}&=&\frac{\mathbf{g}_j[n]\mathbf{p}_j+\mathbf{p}^H_j\mathbf{g}^H_j[n]}{2},\\\nonumber
\mathcal{I}\{\mathbf{g}_j[n]\mathbf{p}_j\}&=&\frac{\mathbf{g}_j[n]\mathbf{p}_j-\mathbf{p}^H_j\mathbf{g}^H_j[n]}{2i},\\\nonumber
a_j[n]&=&\tan(\angle d_j[n]).	
\end{eqnarray}
$\mathcal{C}_1$, $\mathcal{C}_2$ and $\mathcal{C}_3$  are formulated to guarantee that the received signal lies in the correct detection region, which depends on the data symbols. A detailed formulation for $\mathcal{C}_1$, $\mathcal{C}_2$ can be expressed as

\begin{itemize}
	\item For the inner-constellation symbols, the constraints $\mathcal{C}_1$, $\mathcal{C}_2$ should guarantee that the received signals achieve the exact constellation point. The constraints can be written as:    
	\begin{eqnarray}
	\label{inner_circular}
	&\mathcal{C}_1:&\mathcal{R}\{\mathbf{g}_j[n]\mathbf{p}_j\}=\sigma_z\sqrt{\gamma_j}\mathcal{R}\{d_j[n]\}\\\nonumber
	&\mathcal{C}_2:&\mathcal{I}\{\mathbf{g}_j[n]\mathbf{p}_j\}=\sigma_z\sqrt{\gamma_j}\mathcal{I}\{d_j[n]\}.
	\end{eqnarray}
	$\mathcal{C}_3$ is not required in this case.
	\normalsize
	\item  Outermost constellation symbols, the constraints $\mathcal{C}_1$, $\mathcal{C}_2$ should guarantee the received signals lie in the correct detection, which is more flexible than the inner constellation points. The constraints can be written as:     
	\begin{eqnarray}\nonumber
	\label{outermost_circular}
	&\hspace{-0.7cm}\mathcal{C}_1:&\mathcal{R}\{\mathbf{g}_j[n]\mathbf{p}_j\}\geq\sigma_z\sqrt{\gamma_j}\mathcal{R}\{d_j[n]\}, \mathcal{R}\{d_j[n]\}\geq 0 \\\nonumber
	&\quad&\mathcal{R}\{\mathbf{g}_j[n]\mathbf{p}_j\}\leq\sigma_z\sqrt{\gamma_j}\mathcal{R}\{d_j[n]\}, \mathcal{R}\{d_j[n]\}\leq 0 \\
	&\hspace{-0.7cm}\mathcal{C}_2:&\mathcal{I}\{\mathbf{g}_j[n]\mathbf{p}_j\}\geq\sigma_z\sqrt{\gamma_j}\mathcal{I}\{d_j[n]\}, \mathcal{I}\{d_j[n]\}\geq 0\\\nonumber
	&\quad&\mathcal{I}\{\mathbf{g}_j[n]\mathbf{p}_j\}\leq\sigma_z\sqrt{\gamma_j}\mathcal{I}\{d_j[n]\}, \mathcal{I}\{d_j[n]\}\leq 0.
	\end{eqnarray} 
	$\mathcal{C}_3$ guarantees that the received symbol has a certain phase. It should be clear that $\tan(\dot)$ cannot preserve the sign. Therefore, $\mathcal{C}_1,\mathcal{C}_2$, and $\mathcal{C}_3$ should be used together. 
\end{itemize}

\subsubsection{Detailed Optimization For Rectangular Modulation}
For rectangular modulation (e.g. $M$-QAM), the previous optimization can be simplified as:  
\begin{eqnarray}\nonumber
\label{SLP_QAM}
\hspace{-0.5cm}&\mathbf{P}_{opt}&=\arg\min_\mathbf{P} \quad tr(\mathbf{P})\\ \hspace{-0.5cm}&s.t.&\quad\begin{cases}  
\mathcal{C}_1:\mathcal{R}\{\mathbf{g}_j[n]\mathbf{p}_j\}\unlhd \sqrt{\gamma_j}\sigma_z\mathcal{R}\{d_j[n]\}, \forall j\in K, \forall n\in \mathcal{N}\\
\mathcal{C}_2:\mathcal{I}\{\mathbf{g}_j[n]\mathbf{p}_j\}\unlhd \sqrt{\gamma_j}\sigma_z\mathcal{I}\{d_j[n]\}, \forall j\in K, \forall n\in \mathcal{N}\\
\mathcal{C}_3:\mathbf{V}\succcurlyeq 0,\\
\mathcal{C}_4:[\mathbf{T}]_{i,j}=1, i=j, \\
\mathcal{C}_5:|[\mathbf{T}]_{i,j}|=1, i\neq j.
\end{cases}
\end{eqnarray}
$\mathcal{C}_1$, $\mathcal{C}_2$  are formulated to guarantee that the received signal lies in the correct detection region, which depends on the data symbols. A detailed formulation for $\mathcal{C}_1$, $\mathcal{C}_2$ can be expressed as

\begin{itemize}
	\item For the inner-constellation symbols, $\mathcal{C}_1$, $\mathcal{C}_2$ can be formulated as \eqref{inner_circular}.
	
	\item Outer constellation symbols, the constraints $\mathcal{C}_1$, $\mathcal{C}_2$ should guarantee the received signals lie in the correct detection.  The constraints can be written as:     
	\begin{eqnarray}\nonumber
	&\mathcal{C}_1:&\mathcal{R}\{\mathbf{g}_j[n]\mathbf{p}_j\}\geq\sigma_z\sqrt{\gamma_j}\mathcal{R}\{d_j[n]\}, \mathcal{R}\{d_j[n]\}\geq 0 \\\nonumber
	&\quad&\mathcal{R}\{\mathbf{g}_j[n]\mathbf{p}_j\}\leq\sigma_z\sqrt{\gamma_j}\mathcal{R}\{d_j[n]\}, \mathcal{R}\{d_j[n]\}\leq 0 \\
	&\mathcal{C}_2:&\mathcal{I}\{\mathbf{g}_j[n]\mathbf{p}_j\}=\sigma_z\sqrt{\gamma_j}\mathcal{I}\{d_j[n]\}.
	\end{eqnarray}
	\begin{eqnarray}\nonumber
	&\mathcal{C}_1:&\mathcal{R}\{\mathbf{g}_j[n]\mathbf{p}_j\}=\sigma_z\sqrt{\gamma_j}\mathcal{R}\{d_j[n]\}\\\nonumber
	&\mathcal{C}_2:&\mathcal{I}\{\mathbf{g}_j[n]\mathbf{p}_j\}\geq\sigma_z\sqrt{\gamma_j}\mathcal{I}\{d_j[n]\}, \mathcal{I}\{d_j[n]\}\geq 0\\\nonumber
	&\quad&\mathcal{I}\{\mathbf{g}_j[n]\mathbf{p}_j\}\leq\sigma_z\sqrt{\gamma_j}\mathcal{I}\{d_j[n]\}, \mathcal{I}\{d_j[n]\}\leq 0.
	\end{eqnarray}
	\item Outermost constellation symbols, the constraints $\mathcal{C}_1$, $\mathcal{C}_2$ should guarantee the received signals lie in the correct detection. The constraints can be formulated as \eqref{outermost_circular}.    
	
\end{itemize}
In the next section, we replace the norm constraint $\mathcal{C}_5$ by a set of affine constraints that converts \eqref{semidefinite_norm} into convex optimization problem.

\subsection{Argument cuts for constant modulus constraint}

Back to our problem, the constraint $\mathcal{C}_5$ is not convex, but we will show that the
problem still can be efficiently solved using convex optimization. 
Semidefinite relaxation can only be utilized to obtain a lower bound to the
optimal objective function and potentially determine an approximate solution to the original problem. The semidefinite relaxation has several nice properties and
is typically solved using a primal-dual interior point method,
for which there are several standard optimization tools, e.g.,
SeDuMi.

The constant modulus constraint can be reformulated by a set of affine constraints using argument cuts defined  in \cite{Chinese_norm}. The argument cuts were proposed to solve the problem of precoding design for single group multicasting globally and to tackle the rank one constraint imposed by the semidefinite relaxation. Here, we use the same concept of the argument cut to tackle the constant modulus constraints and relax them to solve the problem efficiently. To do so, we reformulate $\mathcal{C}_5$ as
\begin{eqnarray}
[\mathbf{T}]_{i,j}=c_{i,j}, \text{where},\{c_{i,j} \in \mathbb{C} | |c_{i,j}| = 1\}.
\end{eqnarray}
We need to find a tight convex relaxation for $\mathcal{C}_5$, we introduce $u_{i,j} = \mathcal{R}(c_{i,j})$ and $w_{i,j} = \mathcal{I}(c_{i,j})$, and assume
that the argument of $c_{i,j}$ satisfies $\arg (c_{i,j}) \in [l_{i,j}, u_{i,j}]$. Let
\begin{eqnarray}\nonumber
  \hspace{-0.4cm}  &\mathcal{D}[l_{i,j},u_{i,j}]& =\Bigg\{(w_{i,j}, v_{i,j})\bigg|\begin{array}{c} c_{i,j}= w_{i,j} + v_{i,j}i,|c_{i,j}| \geq 1\\\arg (c_{i,j}) \in [l_{i,j}, u_{i,j}] \end{array} \Bigg\},
 \end{eqnarray}
and let ${Conv}(\mathcal{D}[l_{i,j},u_{i,j}])$ be the convex envelope of the set
$\mathcal{D}[l_{i,j},u_{i,j}]$. The following proposition characterizes $Conv(\mathcal{D}[l_{i,j},u_{i,j}])$ in
the general case if $u_{i,j} - l_{i,j} \leq \pi$.
\begin{proposition}{\cite{Chinese_norm}}
 Suppose $l_{i,j}$ and $u_{i,j}$ in satisfying $u_{i,j} - l_{i,j} \leq \pi$. Then, 
\begin{eqnarray}\nonumber
\label{convex set}
 \hspace{-0.9cm}Conv(\mathcal{D}[l_{i,j},u_{i,j}])
 =\Bigg\{(w_{i,j},v_{i,j})\Bigg|
 \begin{array}{c}
\sin(l_{i,j})w_{i,j} - \cos(l_{i,j})v_{i,j} \leq 0\\
\sin(u_{i,j})w_{i,j} - \cos(u_{i,j})v_{i,j} \geq 0\\
a_{i,j}w_{i,j} + b_{i,j}v_{i,j} \geq  a^2_{i,j}+b^2_{i,j},
\end{array}
\end{eqnarray}
where 
\begin{eqnarray}
\label{convex set 2}
a_{i,j}=\frac{\cos{l_{i,j}}+\cos{u_{i,j}}}{2},\quad b_{i,j}=\frac{\sin{l_{i,j}}+\sin{u_{i,j}}}{2}.
\end{eqnarray}	
\end{proposition}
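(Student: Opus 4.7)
The plan is to establish the equality by proving the two set inclusions separately. Denote the right-hand side polyhedron by $S$, write $P_l = (\cos l_{i,j}, \sin l_{i,j})$ and $P_u = (\cos u_{i,j}, \sin u_{i,j})$, and note that $(a_{i,j}, b_{i,j}) = \tfrac{1}{2}(P_l + P_u)$ is exactly the midpoint of the chord between $P_l$ and $P_u$. A direct computation gives $P_l \cdot (a_{i,j}, b_{i,j}) = P_u \cdot (a_{i,j}, b_{i,j}) = \tfrac{1}{2}(1 + \cos(u_{i,j} - l_{i,j})) = a_{i,j}^2 + b_{i,j}^2$, which identifies the third defining inequality of $S$ as the closed half-plane on the far side of that chord from the origin. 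For the forward inclusion $\mathcal{D}[l_{i,j}, u_{i,j}] \subseteq S$, take $c = w_{i,j} + v_{i,j} i$ with $|c| = r \geq 1$ and $\arg(c) = \theta \in [l_{i,j}, u_{i,j}]$. The first two inequalities reduce to $r \sin(\theta - l_{i,j}) \geq 0$ and $r \sin(\theta - u_{i,j}) \leq 0$, both of which hold since $u_{i,j} - l_{i,j} \leq \pi$; the chord inequality follows from the product-to-sum identity $a_{i,j} w_{i,j} + b_{i,j} v_{i,j} = r \cos((u_{i,j} - l_{i,j})/2)\, \cos(\theta - (l_{i,j} + u_{i,j})/2) \geq \cos^2((u_{i,j} - l_{i,j})/2) = a_{i,j}^2 + b_{i,j}^2$, valid because $|\theta - (l_{i,j} + u_{i,j})/2| \leq (u_{i,j} - l_{i,j})/2 \leq \pi/2$ and $r \geq 1$. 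Convexity of $S$ then gives $Conv(\mathcal{D}[l_{i,j}, u_{i,j}]) \subseteq S$.

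For the reverse inclusion I would exploit the polyhedral structure of $S$. When $u_{i,j} - l_{i,j} < \pi$, the three bounding lines have pairwise intersections $P_l$, $P_u$, and the origin; the origin is ruled out by the chord inequality (since $a_{i,j}^2 + b_{i,j}^2 > 0$), so $P_l$ and $P_u$ are the only vertices of $S$, and the two extreme-ray directions are $P_l$ and $P_u$ themselves. By the Minkowski--Weyl representation, every $(w_{i,j}, v_{i,j}) \in S$ can be written as $\alpha P_l + \beta P_u$ with $\alpha, \beta \geq 0$, and substituting into the chord inequality yields $(\alpha + \beta)(a_{i,j}^2 + b_{i,j}^2) \geq a_{i,j}^2 + b_{i,j}^2$, hence $\alpha + \beta \geq 1$. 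To realize such a point as a convex combination of two elements of $\mathcal{D}$, pick any $t$ in the interval $[\max(0, 1 - \beta), \min(1, \alpha)]$, which is nonempty precisely because $\alpha + \beta \geq 1$; outside the trivial subcases $\alpha = 0$ or $\beta = 0$ (where $(w_{i,j}, v_{i,j}) \in \mathcal{D}$ directly), set $Q_1 = (\alpha / t) P_l$ and $Q_2 = (\beta / (1 - t)) P_u$. Then $Q_1, Q_2$ lie on the bounding rays of the sector with magnitudes at least one, so $Q_1, Q_2 \in \mathcal{D}$, and $t Q_1 + (1 - t) Q_2 = \alpha P_l + \beta P_u = (w_{i,j}, v_{i,j})$, yielding $S \subseteq Conv(\mathcal{D}[l_{i,j}, u_{i,j}])$. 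The degenerate endpoint $u_{i,j} - l_{i,j} = \pi$ collapses $S$ to a half-plane through the origin and can be treated analogously by writing any point there as the midpoint of two sufficiently far-out points on the opposing rays.

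The main hurdle I anticipate is making the Minkowski--Weyl step watertight: correctly identifying the extreme rays of $S$, ruling out the origin as a vertex via the strict inequality $a_{i,j}^2 + b_{i,j}^2 > 0$, and handling the boundary subcases of the final convex-combination construction without dividing by zero. Everything else is trigonometric bookkeeping, and the geometric picture drives the entire argument: $\mathcal{D}$ is an angular sector with the closed unit disk removed, and its convex hull is obtained simply by filling in the thin circular segment bounded by the chord joining the two points where the sector's rays meet the unit circle.
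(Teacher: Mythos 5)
The paper does not actually prove this proposition --- it is imported verbatim from the cited reference \cite{Chinese_norm} and used as a black box --- so there is no in-paper argument to compare against; your proposal is therefore judged on its own terms, and it is essentially a correct, self-contained proof. The forward inclusion is right: I checked your three reductions ($r\sin(\theta-l_{i,j})\ge 0$, $r\sin(u_{i,j}-\theta)\ge 0$, and $a_{i,j}w_{i,j}+b_{i,j}v_{i,j}=r\cos(\tfrac{u_{i,j}-l_{i,j}}{2})\cos(\theta-\tfrac{l_{i,j}+u_{i,j}}{2})$ with $a_{i,j}^2+b_{i,j}^2=\cos^2(\tfrac{u_{i,j}-l_{i,j}}{2})$), and convexity of the polyhedron finishes it. The reverse inclusion for $u_{i,j}-l_{i,j}<\pi$ is also sound: the first two half-planes alone already force membership in the cone generated by $P_l$ and $P_u$, the chord inequality then reads $(\alpha+\beta)(a_{i,j}^2+b_{i,j}^2)\ge a_{i,j}^2+b_{i,j}^2$ with $a_{i,j}^2+b_{i,j}^2>0$, and your choice of $t\in[\max(0,1-\beta),\min(1,\alpha)]$ (taken in the open interval $(0,1)$, which is possible whenever $\alpha,\beta>0$) produces two points of $\mathcal{D}$ realizing the convex combination. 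The one genuine soft spot is your sketch of the degenerate case $u_{i,j}-l_{i,j}=\pi$: there $P_u=-P_l$, so any combination of points \emph{on the two opposing rays} lies on the boundary line itself and cannot reach a point in the interior of the half-plane inside the unit disk; you instead need to translate such a point along the direction $P_l$ (parallel to the boundary) until both endpoints leave the unit disk, and take it as their midpoint. This case is not vacuous for the paper, since Algorithm~1 initializes with $l_{i,j}=0$, $u_{i,j}=\pi$, so it is worth stating the fix explicitly rather than waving at it.
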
The proposition convexifies the set $\mathcal{D}[l_{i,j},u_{i,j}]$ by representing the constant modulus and bilinear constraints using three linear inequalities in \eqref{convex set}. This helps solving the problem by tightening the the width
of the interval $[l_{i,j}, u_{i,j}]$, the tighter the constraint becomes as the width of the
interval goes to zero. 

With the help of the argument
cuts, we are able to develop efficient convex relaxations for
problem. Using \eqref{convex set}, the optimization problem can be formulated as
\begin{eqnarray}\nonumber
\footnotesize
\label{semidefinite_ac}
\hspace{-0.5cm}\mathbf{P}_{opt}&=&\arg\underset{\mathbf{P}}{\min}\quad tr(\mathbf P)\\
&s.t.& \begin{cases}\mathcal{C}_1:\mathcal{R}\{\mathbf{g}_j[n]\mathbf{p}_j\}-\sigma_z\sqrt{\gamma_j}\mathcal{R}\{{d}_j[n]\}\unlhd {0}, \forall j\in \mathcal{K},\\
\mathcal{C}_2:\mathcal{I}\{\mathbf{g}_j[n]\mathbf{p}_j\}-\sigma_z\sqrt{\gamma_j}\mathcal{I}\{{d}_j[n]\}\unlhd {0}, \forall j\in \mathcal{K},\\
\mathcal{C}_3:\mathbf{V}\succcurlyeq 0,\\
\mathcal{C}_4:[\mathbf{T}]_{i,j}=1, i=j, \\
\mathcal{C}_5:[\mathbf{T}]_{i,j}=w_{i,j}+v_{i,j}\iota, i\neq j\\
\mathcal{C}_6:\sin(l_{i,j})w_{i,j}-\cos(l_{i,j})v_{i,j}\leq 0, i\neq j\\
\mathcal{C}_7:\sin(u_{i,j})w_{i,j}-\cos(u_{i,j})v_{i,j}\geq 0, i\neq j\\
\mathcal{C}_8:a_{i,j}w_{i,j}+b_{i,j}v_{i,j}\geq a^2_{i,j}+b^2_{i,j}, i\neq j.\\
\end{cases}
\end{eqnarray}
 The optimization problem in \eqref{semidefinite_ac} is convex semidefinite program, and can be used to solve the optimization in \eqref{semidefinite_norm} by iterating through branch and bound algorithm as it will be described in the next section. The solution for the optimization problem results in $[\mathbf{U}]_{j,i}=[\mathbf{U}]^{*}_{i,j}$ since the matrix $\mathbf{U}$ is semidefinite one. All constraints in $\mathcal{C}_5$-$\mathcal{C}_8$ are expanded  into $K-1$ constraints. 

\section{Algorithm for Joint Constellation rotation and symbol-level precoding optimization }
\label{Algorithm}
In this section, we present an algorithm to solve the optimization problems \eqref{semidefinite_ac}. Although we reformulate the constraints into relaxed semidefinite program, an enumeration procedure to search for the optimal solution is required.
In this subsection, a branch-and-bound algorithm for globally solving problem \eqref{semidefinite_norm} is proposed. The main idea is to relax the original problem, with appropriate set of parameter and gradually 
tighten the relaxation by reducing the width of the associated
intervals. 

The main purpose of using branch and bound is to reformulate the problem as a tree-search problem for which we can prune large parts of the tree in order to reduce the
computational complexity. The general concept of branch and bound algorithm is based on dividing the feasible region into smaller sub-regions
and constructs sub-problems over them
recursively. In the enumeration procedure, a lower bound for
each sub-problem is calculated by solving a relaxation problem.
A sub-problem with a lower bound being greater than the obtained
upper bound is considered to be inactive, which the global solution does not belong in
its feasible region, and therefore will no longer be divided into subgroup and consequently there will be no futher enumeration.
The procedure terminates until all active sub-problems have
been enumerated, and then an optimal solution within a given
relative error tolerance can be obtained. There are two aspects to be considered while branch and bound algorithm is applied: 1) The optimum objective
function value in each problem node represents a lower bound
to its descendent nodes.  2) Certain nodes and their corresponding subtrees are
pruned from the search tree  if the subproblem
is infeasible, the objective function value of the subproblem exceeds
the current upper bound or the other branches lower bounds are higher. In our problem, the lower and the upper bound can be describe as
\begin{itemize}
	\item Upper-bound: the upperbound for the problem is obtained when no rotation is applied $c_{i,j}=1$, the transmit power is the highest in this scenario.
	\item Lower-bound: the lowerbound is the optimum objective function value from its asendending nodes
	\item Termination: The problem is terminated if the the gap between the upper and lower bound is lower than preselected error value.
\begin{eqnarray}
\frac{U-L}{L}\leq \epsilon
\frac{K-\sum|c_{i,j}|}{K}\leq \epsilon_\circ.
\end{eqnarray}
\end{itemize}

\begin{algorithm}
	\SetAlgoLined
	\textbf{Input} $\mathbf{h}_j$, $\mathbf{D}$, $\epsilon_\circ$, $l_{i,j}=0$, $u_{i,j}=\pi$ \\
	\textbf{Output} $\mathbf{x}^{f}$, $\theta_k$\\
	initialization $\epsilon=10$, $m=0$,  $\mathcal{A}^0=\prod^{K}_{k=0}[0,\pi]$\\
	\begin{itemize}
		\item Solve $\mathcal{OP}(\mathcal{A}^0)$ in \eqref{semidefinite_ac} for its optimal solution ($\mathbf{P}$) and its optimal value, and consider as initial lower bound $L_0$.\\
		
		\item solve $\mathcal{OP}$ in \eqref{semidefinite_ac}, assuming $l_{i,j}=0,\forall j,i$ and $u_{i,j}=0,\forall j,i$ for its optimal solution, and consider it as an upperbound $U^*$.
		\item\textbf{loop}:if $(U^* - L_m)/L_m \leq \epsilon_\circ$  then\\
		return ($\mathbf{P}$) and terminate the algorithm.\\
	
		Set $m \leftarrow m + 1$.\\
		For each set $(i,j)$, Branch the region $[l_{i,j},u_{i,j}]$ into two subregions $[l^m_{i,j},t^m_{i,j}]$, $[t^m_{i,j},u^m_{i,j}]$, where  $t^m_{i,j}=(l^m_{i,j}+u^m_{i,j})/2$\\
		Solve the problems for the newly defined branches at the level $m$.\\
		Find the branch that gives the minimal Lower power bound and set it as the new lower bound $L_m$
	\end{itemize}	
	\caption{Branch and Bound for joint constellation rotation and precoding (SLPRo) }
\end{algorithm}

%\begin{center}
%	\hspace{-0.5cm}\begin{tabular}{p{8.3cm}}
%		\hline
%	\textbf{Algorithm:}	Branch and bound for joint constellation rotation and symbol-level precoding \\
%		\hline
%		\hspace{0.2cm} \textbf{Input}: $\mathbf{h}_j$, $\mathbf{D}$, $\epsilon$, $l_{i,j}=0$, $u_{i,j}=\pi$ \\
%		\hspace{0.2cm} \textbf{Output}:  $\mathbf{x}^{f}$, $\theta$\\
%		\hspace{0.2cm} Repeat\\
%			\hspace{0.2cm} Repeat\\
%		\hspace{0.2cm} solve $\hat{P}=$\\
%		\hspace{0.2cm} set $t_{i,j}=\frac{l_{i,j}+u_{i,j}}{2}$\\
%		\hspace{0.2cm}if $\hat{P}\leq P$\\
%		\hspace{0.2cm}then\quad $t_m \rightarrow t_1$\\
%		\hspace{0.2cm}else\quad$t_m \rightarrow t_2$\\
%		Until $1-|c_k|^2\leq \epsilon$\\
%		Return $t_m$\\
%		\hline\\
%	\end{tabular}
%\end{center}
\normalsize

\subsection{Solution}
With $\mathbf{P}$ at hand, a straightforward way to find the
output vector $\mathbf{x}$ is by decomposing the matrix  $\mathbf{P}$ into
\begin{eqnarray}
\mathbf{P}=\lambda \mathbf{e}\mathbf{e}^H,	
\end{eqnarray}
where $\lambda$ denotes the largest eigenvalue of $\mathbf{P}$  and the only nonzero eigen value, and $\mathbf{e}$ denotes
the eigenvector of $\mathbf{P}$ corresponding to $\lambda$. Now, we can express the output vector $\mathbf{p}$ as:
\begin{eqnarray}
\mathbf{p}=\sqrt{\lambda}\mathbf{e},
\end{eqnarray}
and the optimal phase rotation for user $k$,$\phi_k$, can be found as 
\begin{eqnarray}
&\alpha_k&={\mathbf{e}^H\mathbf{p}_k},\\
&\phi_k&=\angle \alpha_k.
\end{eqnarray}
In order to ensure that the rank of $\mathbf{X}$ equals to 1, the value $|\alpha_k|$ should be equal to 1. if $|\alpha_k|\leq 1$, this means that rank of $\mathbf X$ is higher than one.
\begin{figure}[h]
	\label{model_slp}
	\vspace{-4.5cm}
	\hspace{-9.0cm}	\includegraphics[trim=0 0 0 50,clip,scale=1.0]{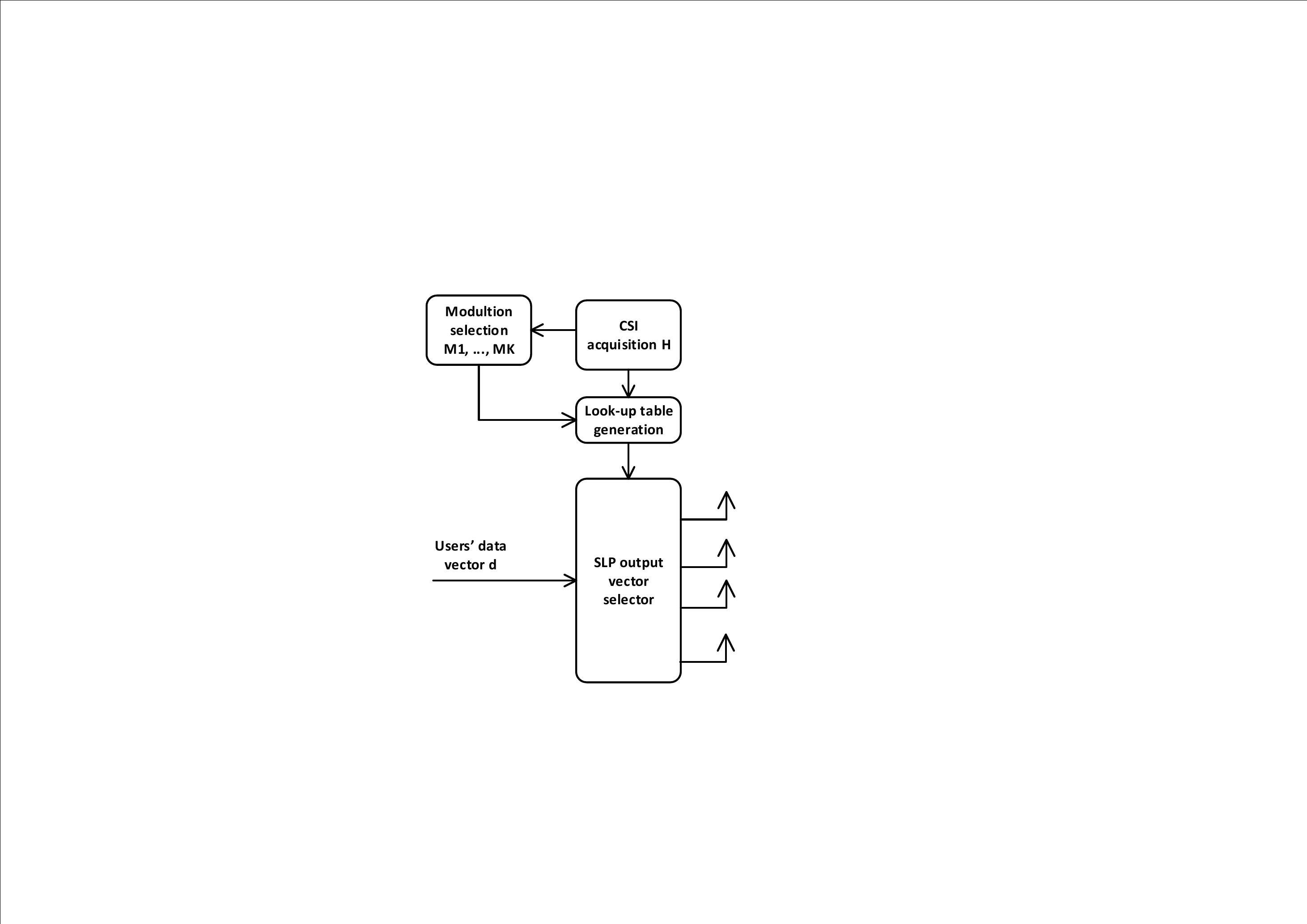}
	\vspace{-5.8cm}
	\caption{\label{model_slp} Proposed Symbol-level Precoding System}	
\end{figure}
\begin{remark}
It is worth noting that 
ML detection at user $k$ can be performed by simply
checking the phase of $\frac{\mathbf{y}_k}{\exp(-i\theta_k)}$ in case of MPSK modulation. For $M$-QAM modulation,  the detector finds the correct the detection region of $\frac{\mathbf{y}_k}{\exp(-i\theta_k)}$ as described in \ref{detection region}.
\end{remark}

\subsection{Generic Framework of SLP}
In this subsection, We propose a generic framework for efficient SLP system in the downlink of MISO system. In Fig. \ref{model_slp}, CSI and modulation allocation is fed to the look-up table generator. The look-up table changes with CSI. Once data vector $\mathbf{d}[n]$ needs to be precoded, the optimal output vector is selected. It can be urgued that in the case of no constellation rotation optimization, the output vector can be optimized immediately based on the input data vector, and then stored in the look-up table; as the same data vector may come later in the data streams.  

Regarding the look-up table generation as described in Fig. 3, we can summarize it in the following steps

\begin{itemize}
\item The channel state information of selected users and their allocated modulations are fed to the look-up table generator. A spatio-temporal channel is formulated as \eqref{Spatio-temporal}.
\item  Based on the selected modulation, the subset of the data vectors that can reproduce the full brute space is generated as \eqref{reduced combination}. This subset is fed to the optimization process. Also, the rotation decision is provided to know whether to optimize the phase rotation and the output vectors as in SLPRo algorithm or just output vectors as in SLP. 
\item  Mapping the selected data vectors and their corresponding output vectors is rendered. Finally, the full brute force space of data vectors is generated as in \eqref{reduced_data_vectors}. The output vectors for remaining data vectors are calculated as\eqref{relation}.
\end{itemize}

 \begin{figure*}[t]
	\hspace{0.2cm}
	
	\begin{tabular}[t]{c}
		\begin{minipage}{17 cm}
			\vspace{-1cm}
			\label{Framework}
			\hspace{-2.1cm}	\includegraphics[trim=020 50 50 50 ,clip,scale=0.525]{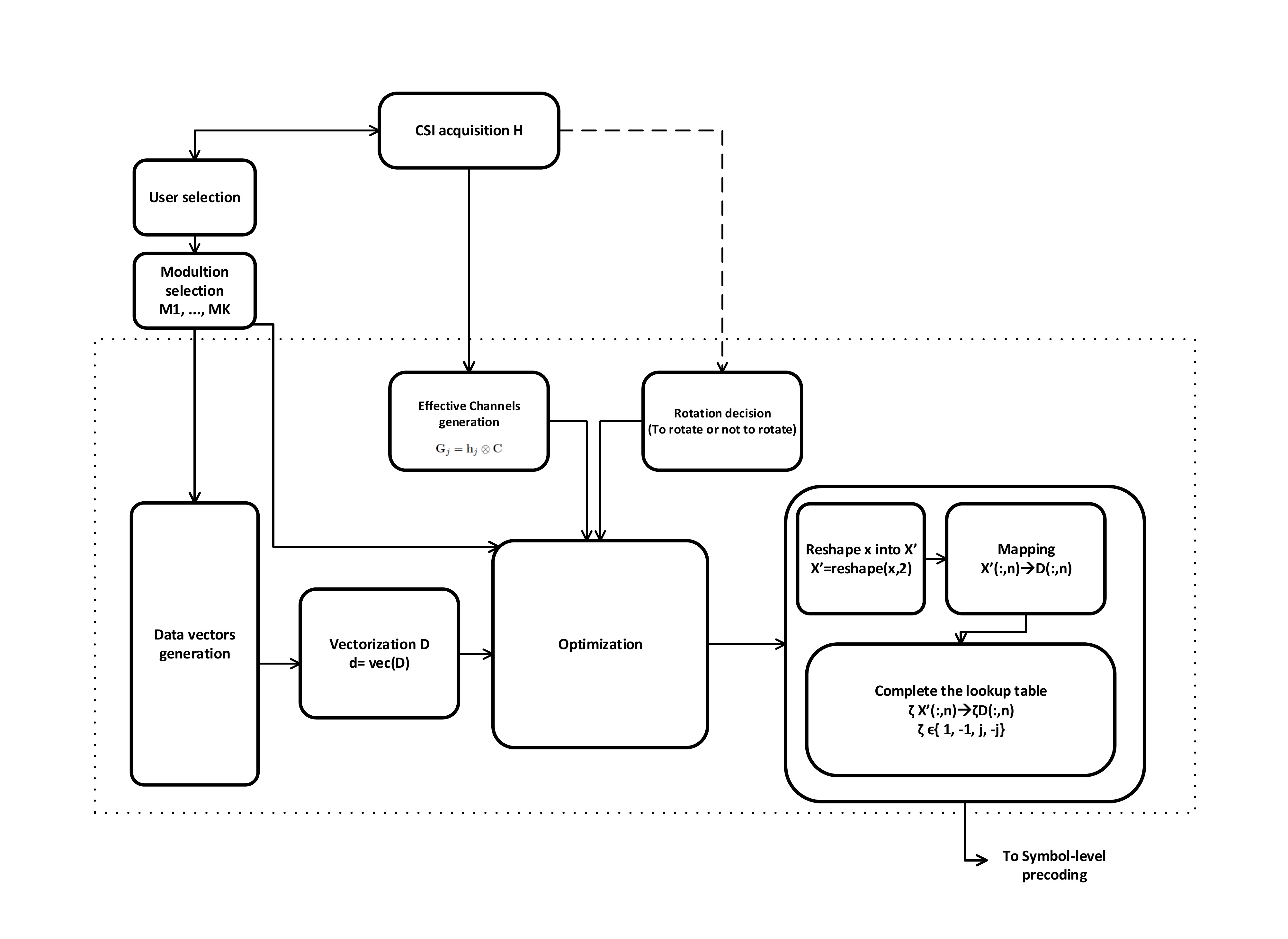}
			%\vspace{-0.4cm}
			\caption{Schematic diagram for proposed framework of look-up table generation taking into the account the spatio-temporal model}	
		\end{minipage}
	\end{tabular}	
\end{figure*}

%\section{}
%\section{Complexity}
\section{Numerical Results}
\label{Results}
In order to assess the performance of the proposed transmissions schemes, Monte-Carlo simulations of the different algorithms have been conducted to
study the performance of the proposed techniques and compare them to the state
of the art. For the constellation rotation results, it is assumed that $\epsilon_\circ=0.0001$. The adopted channel model is assumed
to be 
\begin{eqnarray}
\mathbf{h}_k\sim\mathcal{CN}(0,\sigma_c^2\mathbf{I}).
\end{eqnarray}
For the sake of comparison, we depict the performance of the optimal block-level beamforming, which can be formulated as \cite{mats}
\begin{eqnarray}\nonumber
\label{SPM_Unicast}
%\begin{aligned}
\hspace{-0.5cm}\mathbf{W}(\mathbf{H},\mathbf{\gamma}) = 
& \arg\underset{\mathbf{W}}{\min}& 
\quad \sum_{j=1}^M\|\mathbf{w}_j\|^2 \\
& \text{s.t.}&  \frac{|\mathbf{h}^T_j\mathbf{w}_j|^2}{\sum_{k\neq j,k=1}^M|\mathbf{h}^T_j\mathbf{w}_k|^2+\sigma_z^2 }\geq \gamma_j, j\in K. 
%\end{aligned}
\end{eqnarray}
%\begin{figure}[h]
%	%\begin{center}
%\includegraphics[scale=0.6]{transmit_power_vs_snr.eps}
%		\caption{\label{Transmit_power_vs_snr_memoryless_channel} Transmit power vs. SNR for different modulation, $K=2$, $M=2$, QPSK, $\rho=0$ }
%	%\end{center}
%\end{figure}
In Table \ref{Acronym}, we summarize the techniques and the algorithms that are used in Numerical section. 
\begin{table}
	\begin{center}
		\hspace{-0.02cm}\begin{tabular}{|p{3.0cm}|p{1.5cm}|p{1.5cm}|}
			\hline
			Technique&Acronym&Equation Number\\
			\hline
			Optimal beamforming&OB&\eqref{SPM_Unicast},\cite{mats}\\
			\hline
			SLP without rotation&SLP   &\eqref{SLP_conventional_temporal}\\
			\hline
			SLP with rotation&SLPRo&Algorithm 1\\
			\hline
		\end{tabular}
		\vspace{0.2cm}
		\caption{\label{Acronym} Transmit power in dB  vs SNR, 8-QAM }
	\end{center}
\end{table}

In Fig. \ref{Transmit_power_vs_snr_3users}, we depict the performance of the power consumption with respect to target SNR. We compare the conventional optimal beamforming with symbol-level precoding technique with and without constellation rotation (SLPR and SLP). It can be deduced that the optimal beamforming, in general, needs more power to satisfy the required SNR, except for the low SNR region of QPSK modulation, where it needs less power than SLP. In SLP, the precoding strategies aim at exploiting interference, the benefits can be doubtable at low SNR regime as the limiting factor is the noise not the interference. From Fig. \ref{Transmit_power_vs_snr_3users}, it can be deduced that SLPR always outperforms  SLP, the power saving is 1.9 dB for BPSK, and 2.1 dB for QPSK. In comparison with conventional optimal beamforming,  SLPR achieves power saving from 2.5 dB to 4.5 using BPSK modulation, and from 1 dB to 5 dB using QPSK modulation. The constellation rotation in SLP improves the chances of generating constructive interference and therefore better interference exploitation is anticipated.
%\begin{figure}[h]
%	%\begin{center}
%	\includegraphics[scale=0.7]{Transmitpowerqpsk_3users.eps}
%	\caption{\label{Transmit_power_vs_snr_3users} Transmit power vs. SNR  $K=3$, $M=3$, QPSK, $\rho=0$ }
%	%\end{center}
%\end{figure}
\begin{figure}[h]
	%\begin{center}
	\includegraphics[scale=0.48]{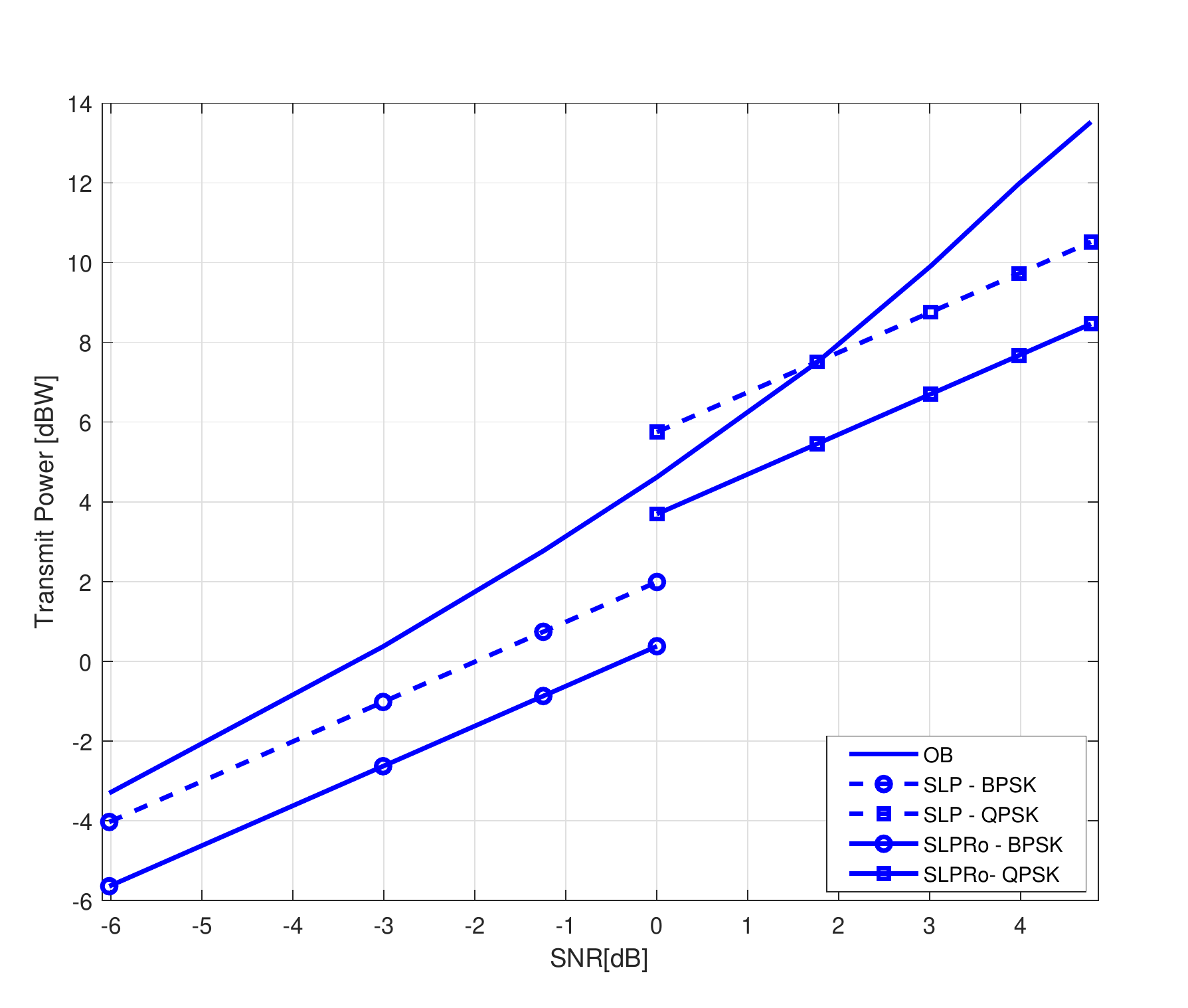}
	\caption{\label{Transmit_power_vs_snr_3users} Transmit power vs. SNR  for different modulations $K=3$, $M=3$. }
	%\end{center}
\end{figure}

\begin{table}
	\begin{center}
		\hspace{-0.02cm}\begin{tabular}{|p{2.5cm}|p{1.2cm}|p{1.2cm}|p{1.2cm}|}
			\hline
			SNR (dB)&4.7712&7.7815&9.5424\\
			\hline
			Optimal beamforming&11.6878& 15.1144& 17.0172\\
			\hline
			SLP&10.8732   &13.8835  &15.6444\\
			\hline
			SLPRo& 10.6227& 13.6330& 15.3939\\
			\hline
		\end{tabular}
		\vspace{0.2cm}
		\caption{\label{energytable} Transmit power in dB  vs SNR, 8-QAM }
	\end{center}
\end{table}

In Table \ref{energytable}, the transmit power with respect to SNR is illustated for 8-QAM modulation. The required power to achieve SNR target is the highest for the conventional channel beamforming, which matches the result in Fig. \ref{Transmit_power_vs_snr_3users}. SLP techniques outperform the conventional beamforming, and SLP with constellation rotation has the lowest power consumption. However, the amount of power saving achieved by the constellation rotation is 0.25 dB, which is less than the power saving achieved in lower modulation order. For uncorrelated channels, the feasibility of optimizing the constellation rotation lies in the low rate applications.
\subsubsection{Spatially correlated channels} 
The performance of the proposed algorithm is studied for correlated channels. The channel can be modelled as
\begin{eqnarray}
\mathbf{h}_k\sim\mathcal{CN}(0,\mathbf{C}_k), 
\end{eqnarray} 
where $\mathbf{C}_k$ is the spatial correlation. 
\begin{eqnarray}
	[\mathbf{C}_k]_{i,j}=\begin{cases}
	1&,\quad i=j\\
	a^{i-j}&,\quad i> j\\
	{(a^{*})}^{i-j}&,\quad i< j\\
	\end{cases}\\
	\text{where}  a\in \mathbb{C}, |a|<1.
\end{eqnarray}
The impact of spatial correlation on the performance of different precoding techniques in the downlink is depicted in Fig. \ref{spatial correlation} for two users and two antennas scenario. For very highly correlated scenarios, the optimal beamforming in \eqref{SPM_Unicast} outperforms the symbol-level precoding techniques (with and without rotation) in low SNR regime. This means that the cost of exploiting the interference, when the SNR target is 0 dB, is very high. The required power to achieve certain SNR target for SLP with constellation is less than both optimal beamforming and SLP without constellation rotation. The gap between the SLP with and without constellation rotation is 5.2 dB. The rate at which the transmit power increases with SNR target is different from SLP based technique and optimal beamforming; the rate is linear (in dB) for symbol-level precoding schemes and it is non-linear for optimal beamforming.  Moreover, the benefit of having SLP without constellation rotation diminishes when the channel is spatially correlated, thus the joint precoding and constellation rotation is required at these scenario to benefit from the constructive interference.  
\begin{figure}[h]
	%\begin{center}
	\includegraphics[scale=0.48]{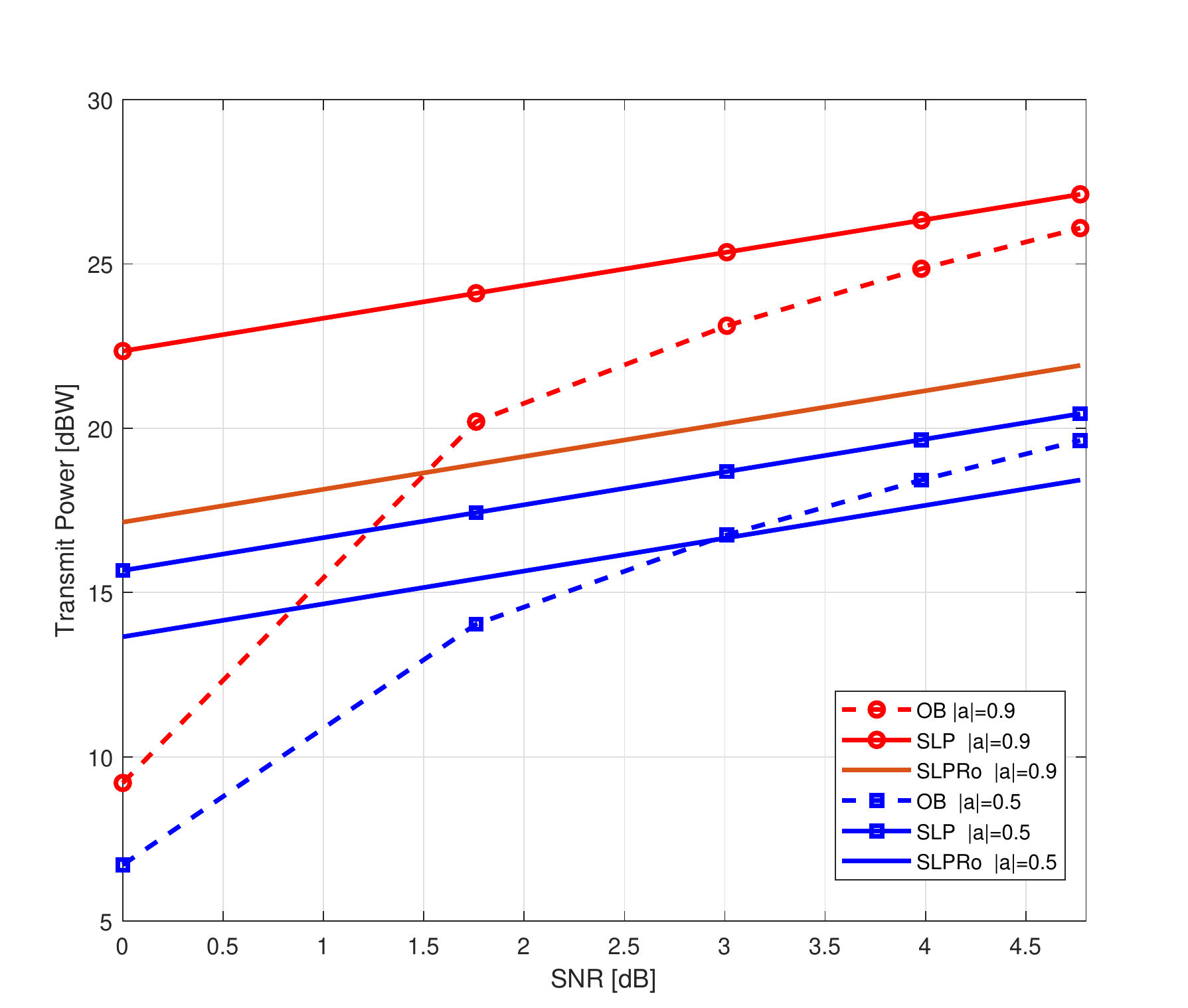}
	\caption{\label{spatial correlation} Transmit power vs. SNR  for different correlation values $K=2$, $M=2$, QPSK. }
	%\end{center}
\end{figure}

\begin{figure}[h]
	%\begin{center}
	\includegraphics[scale=0.48]{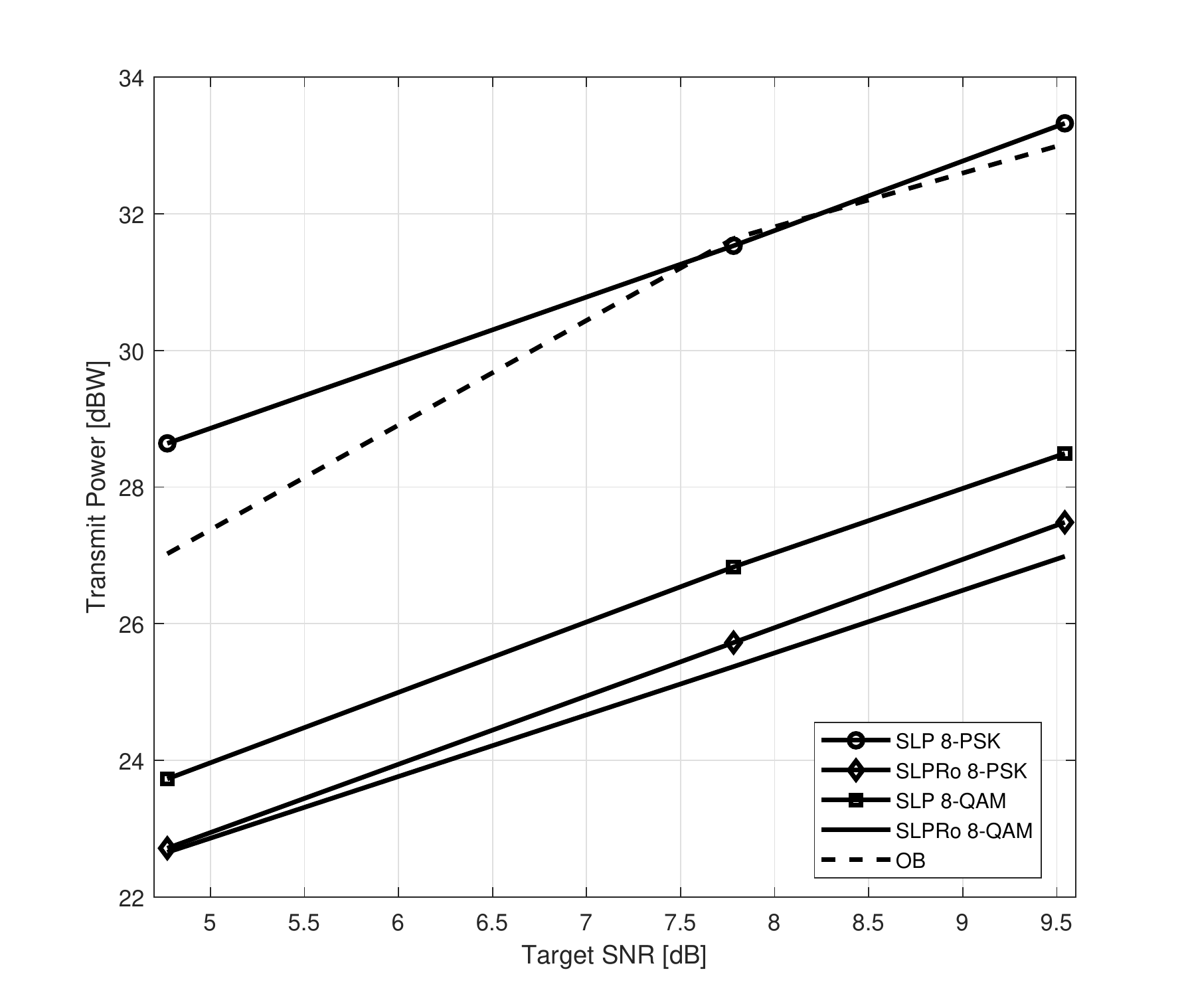}
	\caption{\label{spatial correlation_2} Transmit power vs. SNR   $K=2$, $M=2$ for different modulations. }
	%\end{center}
\end{figure}
In Fig. \ref{spatial correlation_2}, the performance of SLPR  is studied at high correlation scenario $|\rho|=0.9$ for 8-PSK and 8-QAM modulation. It can be noted that the transmit power to achieve certain SNR is the highest in case of SLP using 8-PSK, even it exceeds the optimal beamforming for almost all studied SNR targets, it can be explained by the required alignment to position each symbol in the correct detection region. This changes when we  optimize the constellation rotation in SLPR, this allows to find the the symbols alignment to exploit the interference and to tackle the spatial correlation. This reduces the required power by about 6 dB, which is considerable in this scenario. In the case of 8-QAM, the transmit power the required to achieve the required SNR is always lower than optimal beamforming by about 5 dB, however, this gap increases when the rotation is optimized  to 6.2 dB. This shows that the constellation can be beneficial for higher order modulation in case of spatial correlation existence. Furthermore, the performance of 8-QAM and 8-PSK with optimized rotation is close to each other in comparison with no rotation where huge performance gap exits. It can be seen that the influence of rotation affects the circular modulation tremendously. 

In Fig. \ref{spatial correlation_3}, the feasibility of employing rotation is studied and compared to conventional precoding and SLP for 16-QAM modulation. For no spatial correlation $a=0$ case, the transmit power saving of SLPR reaches up to 2.6 dB and 0.6 dB in comparison to conventional optimal beamforming and SLP respectively. This shows that the rotation impact decreases with higher modulation order. The impact of rotation becomes more influential for higher spatial correlation. At very high spatial correlation, for example $|a|=0.9$, the transmit power saving of SLPR increases to 5.85 dB and 3.64 dB in comparison to conventional optimal beamforming and SLP. This means that it is still beneficial to employ rotation in high order modulation scenarios if the channel exhibits high spatial correlation. \\
\begin{figure}[h]
	%\begin{center}
\hspace{-0.5cm}	\includegraphics[scale=0.5]{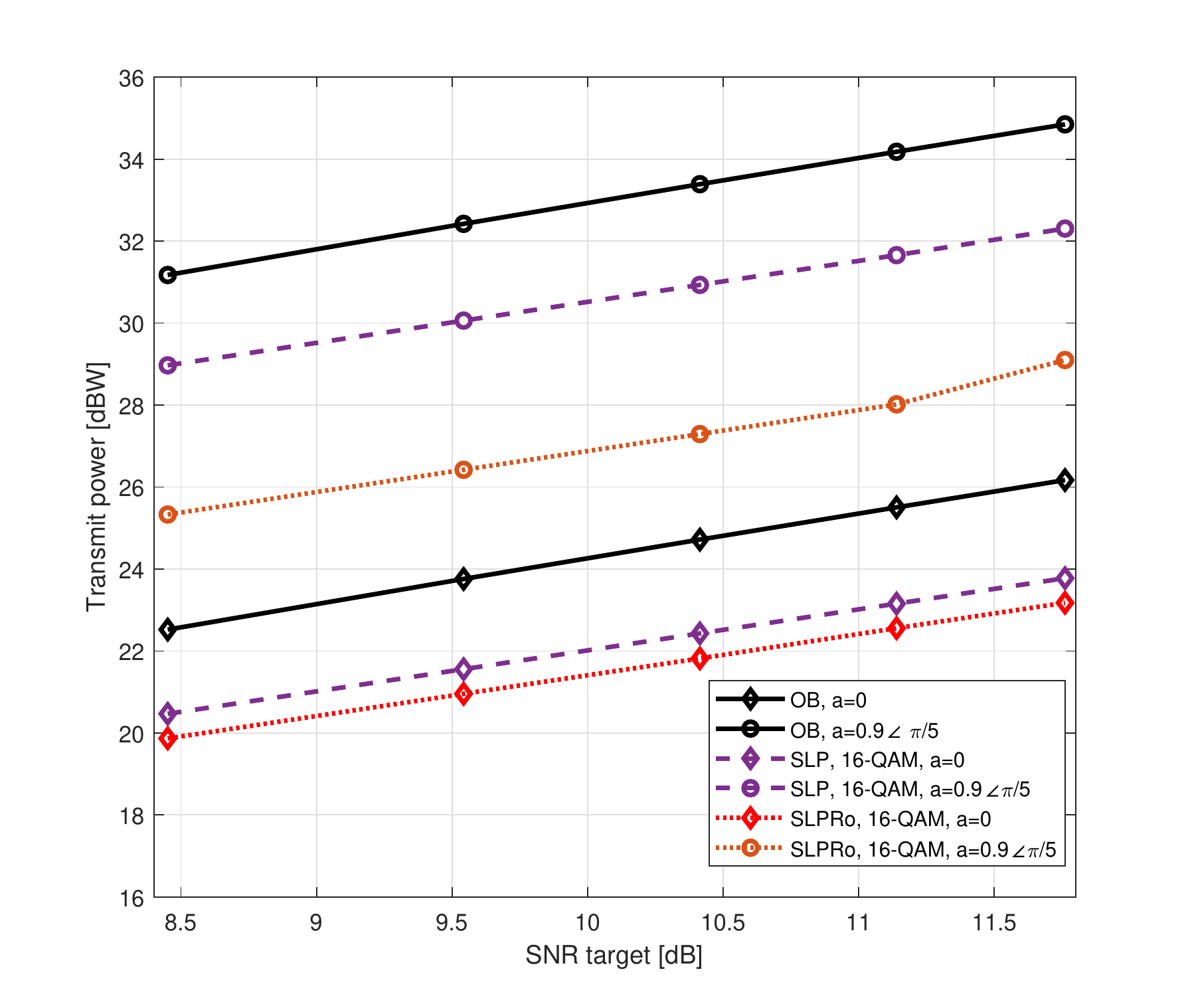}
	\caption{\label{spatial correlation_3} Transmit power vs. SNR   $K=2$, $M=2$ for 16-QAM at different channel correlation values. }
	%\end{center}
\end{figure}
 To facilitate the importance of constellation, we compare the transmit power of different techniques for a certain channel realization:
\begin{eqnarray}
\label{H_test}
	\mathbf{H}=\begin{bmatrix} -0.4965 + 0.0618i &  0.5403 + 1.0261i\\
	-0.3680 + 0.0010i &  0.2111 + 0.8027i
	\end{bmatrix}.
\end{eqnarray}
To get an indication about this channel, we assess the orthogonality among users by using the following metric
\begin{eqnarray}\nonumber
\text{Proj}(\mathbf{h}_1,\mathbf{h}_2)&=&\frac{\mathbf{h}_1\mathbf{h}^H_2}{\|\mathbf{h}_1\|\|\mathbf{h}^H_2\|}\\\nonumber
&=&	0.9771 - 0.2088i= 0.9992\angle -12.0623.
\end{eqnarray}
The channels in \eqref{H_test} are semi co-linear. This issue represents a bottleneck for conventional block-level MISO precoding techniques \cite{Medra,Medra_transaction,bjorn_patent,bjornson,mats,leakage,leakage_per,Swindlehurst_zeroforcing}. The required transmit power to achieve a certain SNR threshold is shown in Table \ref{energytable1}. It can be noted the power consumption is very high for all precoding and modulation techniques. Moreover, the power consumption for the block-level and SLP has a quite similar performance with slight advantage for the block-level techniques. SLPRo needs less power in comparison with the other technique, and the power saving can reach up to 14 dB. This shows the importance of constellation rotation in co-linear channels scenarios, which relies on finding the symbol alignment that makes the required power is minimum. 

\begin{table}
	\begin{center}
		\hspace{-0.22cm}\begin{tabular}{|p{2.8cm}|p{1.5cm}|p{1.5cm}|p{1.5cm}|}
			\hline
			(SNR [dB], modulation)&(4.771,QPSK)&(4.771,8-QAM)&(4.771,8-PSK)\\
			\hline
			Optimal beamforming&43.112& 43.1118& 43.1118\\
			\hline
			SLP&44.1636 &44.091  &44.8710\\
			\hline
			SLPRo&35.466& 29.6172& 30.6547\\
			\hline
		\end{tabular}
		\vspace{0.2cm}
		\caption{\label{energytable1} Transmit power in dB  vs SNR and modulation order. }
	\end{center}
\end{table}

\subsection{Complexity}
We conclude this section by providing a complexity evaluation of the proposed approaches as a function of algorithm execution time.
Since the proposed optimization problems are tackled resorting to numerical
convex optimization tools \cite{convex_boyd}, analytical expressions for the
complexity are hard to derive. Therefore, the complexity is
numerically evaluated in terms of average running time of
the algorithms over the same machine. Nonetheless, the spatiotemporal precoding introduced herein has an advantage in this
regard, since the optimization procedure applies once per block
and not once per symbol slot.

In Table \ref{runningtable1}, we can compare the run-time to solve the optimization problem for the channl-level and symbol-level precoding. For SLP, we focus on \eqref{SLP_conventional_temporal}, which solves the SLP in single optimization problem considering all data vectors and the reduced set of data vectors from which we can find the rest of the output vectors utilizing symmetry. The channel-level precoding does not get affected by the modulation type. SLP techniques are affected by the modulation type due to the number of possible input data vectors. The algorithm run time is evaluated considering the running for all possible data vectors, which is reflected by the number of constraints in the optimization problem. Moreover, we study the performance for two cases: when all the symbols combinations are considered and when they are halved due to data vectors symmetry as in subsection \ref{complexity_red}. It can be seen that the running time increases with the modulation order. For the conventional SLP, where no constellation rotation is optimized, there is a slight performance difference between 8-PSK and 8-QAM despite the fact they have the same number of symbols combinations. It can be also noted that the constellation rotation increases the required time to solve the optimization problem due to the iterative process. The running time can be reduced by increasing the value of $\epsilon_\circ$, however, this affects the precision of the branch and bound algorithm.
Finally, it can be concluded that exploiting input data vector symmetry reduces the required time to solve the optimization problem due to lesser number of possible input data vectors, constraints, and vector sizes. 
\begin{table}[h]
	\begin{center}
		\hspace{-0.2cm}\begin{tabular}{|p{3.0cm}|p{0.8cm}|p{0.9cm}|p{0.8cm}|p{1.1cm}|}
			\hline
			 Modulation&QPSK&8-QAM&8-PSK&16-QAM\\
			\hline
			Optimal beamforming&0.7258&0.7258&0.7258&0.7258\\
			\hline
			SLP& 0.97& 23.29& 31.7260& 25\\
			\hline
		    SLP/ symmetry& 0.4056 &2.6714  &1.33&3.5281\\
			\hline
			SLPRo &17.9&1940.586&2037.83& N/A\\
			\hline
			SLPRo/ symmetry&7.073&143& 78.84& 43.79 \\
			\hline
		\end{tabular}
		\vspace{0.2cm}
		\caption{\label{runningtable1} Average running time (Sec), $M=2$, $K=2$}
	\end{center}
\end{table}
\section{Conclusions}

In this work, we have proposed a complexity reduction scheme that utilizes the constellations symmetry, which results in reducing the number of constraints or the number of optimization to ne solved to $1/4$ of the original SLP problem. Another contribution is an algorithm that jointly optimizes the transmit precoding and constellation
rotation in the downlink of multiuser MISO system. The motivation is to improve SLP capabilities by optimally changing the symbols alignment  to increase the probability of creating constructive interference. Therefore, a better interference exploitation is expected; a better energy efficiency of the  multiuser MISO systems.  The power minimization has been formulated into a non-convex problem containing bilinear and constant modulus constraints. To find the optimal rotation for each user and the output vector for each input data vector,  an algorithm based on semidefinite relaxation and the branch and bound strategy has been proposed. This algorithm reformulates the non-convex constraint into set of affine constraints to treat the problem efficiently.  
 
We have studied the performance of the proposed algorithm with respect to modulation order and channel correlation. It has been shown that the energy efficiency improves tremendously in the correlated channels with the constellation rotation,  where the power saving can reach up to  6 dB in highly correlated channels in comparison with the uncorrelated channel where it can reach up to 2 dB. Moreover, the impact of constellation rotation on higher modulations appears to decline, especially in uncorrelated channel scenarios. This effect still holds for correlated channels, however, it is less pronounced. Finally, the proposed algorithm has a higher complexity than conventional channel-level and symbol-level techniques, this requires to further investigate to reduce the complexity without jeopardizing the achieved power savings. This paper presents a generalization step to optimize the symbol-level precoding in different contexts such as mitigating the channel non-linearities, reducing the peak to average power ratio, etc; to see the impact of the constellation rotation on the performance for different scenarios.

\end{document}